\def\R{{\mathbb R}}
\def\RR{{\mathbb{R}}}
\def\le{\leqslant}
\def\ge{\geqslant}
\newcommand{\wh}{\widehat}
\newcommand{\wt}{\widetilde}
\newcommand{\ud}{\,\mathrm{d}}
\newcommand{\uD}{\,\mathrm{D}}
\newcommand{\Or}{\mathcal{O}}
\newcommand{\bd}{\boldsymbol}
\theoremstyle{plain}
\newtheorem{theorem}{Theorem}[section]
\theoremstyle{definition}
\newtheorem*{remark*}{Remark}
\newcommand{\mc}[1]{\mathcal{#1}}
\newcommand{\abs}[1]{\lvert#1\rvert}
\newcommand{\Abs}[1]{\left\lvert#1\right\rvert}
\newcommand{\bra}[1]{\langle#1\rvert}
\newcommand{\ket}[1]{\lvert#1\rangle}
\newcommand{\wb}[1]{\overline{#1}}
\newcommand{\Var}{\text{Var}}
\DeclareMathOperator{\tr}{Tr}
\newcommand{\jl}[1]{\textbf{\textcolor{red}{JL: #1}}}
\begin{document}

\title{Continuum limit and preconditioned Langevin sampling of the path integral molecular dynamics}

\author{Jianfeng Lu \footnote{Department of Mathematics, Department of Physics and
 Department of Chemistry, Duke University, Durham NC 27708, USA} \  \
	            Yulong Lu \footnote{Department of Mathematics, Duke University, Durham NC 27708, USA} \   \
	            Zhennan Zhou\footnote{Beijing International Center of
  Mathematical Research, Peking University, Beijing 100871,  P.R. China} 
}


\maketitle

\begin{abstract}
  We investigate the continuum limit that the number of beads goes to
  infinity in the ring polymer representation of thermal
  averages. Studying the continuum limit of the trajectory sampling
  equation sheds light on possible preconditioning techniques for
  sampling ring polymer configurations with large number of beads. We
  propose two preconditioned Langevin sampling dynamics, which are
  shown to have improved stability and sampling accuracy. We 
  present a careful mode analysis of the preconditioned dynamics and show their
  connections to the normal mode, the staging coordinate and the Matsubara mode representation for
  ring polymers. In the case where the potential is quadratic, we show that the continuum limit of the preconditioned
 mass modified Langevin dynamics converges to its equilibrium exponentially fast, which suggests that the finite dimensional counterpart  has a dimension-independent convergence rate.
 {In addition, the preconditioning techniques can be naturally applied to
  the multi-level quantum systems in the nonadiabatic regime, which are compatible 
  with various numerical approaches.}
\end{abstract}

\section{Introduction}

Simulating complex chemical systems including quantum effects has been one of the central research subjects in  computational chemistry and physics. For the thermal average calculation, the ring polymer representation,
based on the imaginary time path integral, has been a popular approach
to map a quantum particle in thermal equilibrium to a fictitious
classical necklace of beads on the phase space
\cite{Feynman:72}. As the number of beads in the ring polymer goes to infinity, the  representation is asymptotically exact, and thus it provides an approximate formulation for numerical simulations when the bead number is large.  Based on the
ring polymer representation, there are mainly two types of sampling techniques to calculate the quantum statistical averages, namely, path integral Monte Carlo (PIMC)
\cite{ChandlerWolynes:81, BerneThirumalai:86} and path integral
molecular dynamics (PIMD) \cite{MarklandManolopoulos:08,
  CeriottiParrinelloMarklandManolopoulos:10}. Also, the ring
polymer representation has been used in the dynamics simulations,
such as, the centroid molecular dynamics \cite{CaoVoth:94a,
  CaoVoth:94b, JangVoth:99}, the ring polymer molecular dynamics
\cite{CraigManolopoulos:04, HabershonManolopoulosMarklandMiller:13},
Matsubara dynamics \cite{HeleWillattMuoloAlthorpe:15a,
  HeleWillattMuoloAlthorpe:15b}, and path integral Liouville dynamics
\cite{Liu2014, LiuZhang2016}. In recent years, the ring polymer representation based methods, like path integral molecular dynamics, have been extended to the multi-level systems when the non-adiabatic effects are not negligible \cite{MeyerMiller:79, StockThoss:97,StockThoss:05,LZPIMD1,LZPIMD2,LL18}.

When the number of beads is finite, the ring polymer representation
introduces quantifiable asymptotic error to thermal average
calculations, which is manifested in the bias of the numerical
simulations. When the physical space of the quantum particle is high
dimensional, {when the inverse temperature is large,} or when the
potential landscape is complicated, a large number of beads are needed
to decrease the model error. However, as the number of beads
increases, the spring potential for the ring polymer becomes more
stiff, which brings additional challenges for numerical sampling.
More precisely, the highest frequency of the normal modes of the ring
polymer increases as as the number of beads increases, which restricts
the time steps for the numerical integration of the sample trajectory.


{Improved numerical approaches based on the ring polymer
  representation have received much attention.  Efficient sampling
  techniques are introduced by decomposing the potential function or
  by exploring thermostatting methods, see e.g.,
  \cite{FD84,CBP09,MarklandManolopoulos:08,CeriottiParrinelloMarklandManolopoulos:10,CMP11,CPMM10,LiuLiLiu:16,KBM19}. It
  is also proposed to adjust the artificial masses for the momentum
  variables in the path integral representation to improve the
  numerical efficiency in the framework of normal modes or staging
  coordinates, see e.g.,
  \cite{TBMK93,CM93,CM04,MarklandManolopoulos:08,LiuLiLiu:16}.  More
  recently, there are also some attempts by making use of the
  Matsubata modes, which can be viewed as the finite dimensional
  approximation to the limit that the number of beads goes to infinity
  \cite{Matsubara,HeleWillattMuoloAlthorpe:15a,HeleWillattMuoloAlthorpe:15b}.
  We also remark that previous works mostly focus on finite number of
  modes without considering the infinite dimensional limit of the
  Gibbs measure in the path integral representation, and thus the
  understanding of the improving techniques above are limited to the
  finite dimensional cases.  }

 In this work, we focus on the thermal averages defined in the following form
\[
\langle\wh{A}\rangle = \frac{1}{\mathcal{Z}} \tr[e^{-\beta \wh{H}} \wh{A}],
\]
where $\wh{H}$ is a quantum Hamiltonian operator, $\wh{A}$ is an
observable, $\beta$ is the inverse temperature, and
$\mathcal{Z} = \tr[e^{-\beta\wh{H}}]$ is the partition function.

Formally speaking, when the number of beads goes to infinity, the ring
polymer converges to a (closed) Brownian path in the configuration
space such that the two ends agree. This continuum limit has been
studied in math literature, where the closed Brownian paths are
referred to as Brownian loops (see e.g., \cite{Hairer16}, where the
motion of the random loops on a Riemannian manifold is considered). In
our current context, we define the space of all loops on the Euclidean
space $\mathbb R^d$, denoted by $\mathcal L \mathbb R^d$, as
\[
\mathcal L \mathbb R^d := \{ \mathfrak q: \, [0, \, \beta] \rightarrow \mathbb R^d, \, \mathfrak q(0)=\mathfrak q(\beta) \}.
\]
The energy of the loop $\mathfrak q \in \mathcal L \mathbb R^d$ is given by
\begin{equation}\label{eq:energy}
E(\mathfrak q)=  \int_0^{\beta} \left[ \frac 1 2 |\partial_{ \tau} \mathfrak  q(\tau)|^2 + V(\mathfrak q(\tau)) \right] \ud \tau.
\end{equation}
We emphasize that the inverse temperature appears in the integration
limit in the definition of the loop energy \eqref{eq:energy}. On the
loop space $\mathcal L \mathbb R^d$, we formally  define the (infinite dimensional) Gibbs probability measure $\pi$ on the loop
space by 
\[
\pi(\mathrm{d} \mathfrak{q}) \propto \exp(-E(\mathfrak  q)) \uD[\mathfrak q],
\]
where $\uD[\mathfrak q]$ is a  flat reference measure on the loop space. 
Of course this picture is not rigorous since there is no Lebesgue measure in a infinite dimensional space. Nevertheless, this is rigorous at the level of finite dimensional distribution and a proper definition of $\pi$ is given in Section \ref{sec:pi_cont} by taking a Gaussian measure as the reference measure. With the measure $\pi$ formally defined as above, we have   
\begin{equation}\label{eq:pathintegral}
\begin{aligned}
  \langle\wh{A}\rangle & = \mathbb{E}_{\pi(\mathrm{d}
    \mathfrak{q})}\biggl[
  \frac{1}{\beta} \int_0^{\beta} A(\mathfrak{q}(\tau)) \ud \tau \biggr] \\
  & = \frac{1}{\beta \mathcal{Z}} \int \int_0^{\beta} A(\mathfrak{q}(\tau)) \ud \tau \; e^{-E(\mathfrak{q})} \uD[\mathfrak q] \\
  & = \frac{1}{\beta \mathcal{Z}} \int \int_0^{\beta}
  A(\mathfrak{q}(\tau)) \ud \tau \; \\
& \quad \quad \times e^{-\int_0^{\beta}  \frac{1}{2}
    \abs{\partial_{\tilde \tau} \mathfrak{q}(\tilde \tau)}^2 + V(\mathfrak{q}(\tilde \tau))\ud
    \tilde \tau} \uD[\mathfrak q],
\end{aligned}
\end{equation}
which is the Euclidean path integral representation of the quantum
thermal average. We emphasize again that the partition function $\mathcal{Z}$ is not properly defined here since $\uD[\mathfrak q]$ is not well-defined; see \eqref{Z_N} for the definition in the finite dimensional case.

To sample the measure $\pi$ on the path space, one can generalize the
conventional overdamped Langevin equation to the following stochastic
partial differential equation (SPDE) 
\[
\frac{\ud \mathfrak  q}{\ud t}  = -\frac{ \delta E(\mathfrak q) }{\delta \mathfrak q}  + \xi=\partial_{\tau\tau}\mathfrak  q -\nabla V(\mathfrak q) + \xi ,
\]
which produces ergodic path of loops with respect to $\pi$. Here, we use $\tau$ to denote the parametrization variable of the path and denote by $\xi$
 the space-time white noise. This SPDE can be viewed as a continuum
limit of the overdamped Langevin equation of ring polymers with finite number of beads.

While to the best of our knowledge, the perspective of loop sampling
has not been much explored in the context of PIMD, it is closely
related to the question of sampling diffusion bridges (Brownian paths
with fixed boundary conditions) in applied mathematics and statistics
literature, e.g., \cite{HSVW05,HSV07}. In particular, efficient
numerical algorithms for sampling the diffusion bridges have been
extensively studied in the past decade, see e.g., \cite{BRSV08,
  BPSS11, HSV14, Eberle14}.  In particular preconditioning of the
infinite-dimensional SPDE have been explored to improve the sampling
efficiency, see e.g. \cite{cotter2013mcmc, beskos2017geometric,lan2019adaptive, BPSS11,ottobre2016function}.


Motivated by the similarity between PIMD and diffusion bridge
sampling, in this work, we study the continuum limit of PIMD when the
number of bead goes to infinity. Based on the analytic understanding,
we propose two preconditioned Langevin sampling dynamics, both
motivated by recent techniques proposed in the context of diffusion
bridge sampling.  Our first preconditioning approach is based on the
idea of applying the covariance operator as in
\cite{HSV07,BRSV08,HSV14,ottobre2016function}. After applying the covariance operator
properly in the context of PIMD, the frequencies of the normal modes
of the preconditioned Langevin dynamics have a uniform upper bound
with respect to the bead number. In particular, the more important
modes are mapped to the ones closer to the upper bound such that those
modes are favored in sampling.  Secondly, inspired by the treatments
on the phase space as in \cite{BPSS11} and \cite{ottobre2016function}, we propose a mass-modified
Langevin dynamics, where the frequencies of all normal modes are
adjusted to $1$ in the harmonic case.  Such mass-modified Langevin
dynamics naturally connects to their continuum limit, the SPDEs on the
phase space, and can be reformulated to facilitate constructing
efficient numerical algorithms.  In addition, we prove in the harmonic case that 
the law of the preconditioned SPDEs converges to the equilibrium measure exponentially with an explicit rate.

{We emphasize that studying the
  continuum limit of the sampling measure and preconditioning
  techniques for the sampling equations in the infinite dimensional
  limit naturally give their finite dimensional counterparts. The
  subtle paradigm shift lies in that while most previous works study
  the preconditioning in the large but finite dimensional cases, we
  make use of the continuum limit to guide us in designing
  preconditioned sampling dynamics. Thus, we expect the proposed
  sampling equations to exhibit uniformly satisfying performances with
  respect to the number of beads.}  

{ For finite dimensional systems, the techniques boil down to
  regularizing the spring potential with a constable multiple of
  {the quadratic potential} and applying its inverse to
  precondition the sampling dynamics. The proposed preconditioning
  strategy can be easily extended to the multi-level quantum systems,
  where the nonadianatic interplay between the electronic energy
  levels is present.  }


To sum up, we propose two preconditioned Langevin dynamics, which are
shown to have superior numerical performances in stability and
sampling accuracy.  The rest of the paper is outlined as follows. We
study the continuum limit of the ring polymer approximation in
position variables in Section \ref{sec:pi_cont}, and the continuum
limit of the overdamped Langevin dynamics and its preconditioned
version are introduced in Section \ref{sec:overdamp}. In Section
\ref{sec:underdamp}, two preconditioned underdamped Langevin dynamics
are proposed with different choices of mass matrices of the auxiliary
momentum variables, where the latter one is shown to connect with the
continuum limit of the Gibbs measure on the phase space. The two
preconditioned Langevin dynamics are further analyzed and compared
with the normal mode representation, {the staging coordinate
  representation,} and the Matsubara representation in Section
\ref{sec:com} and Section \ref{sec:mats}. The convergence to the equilibrium  of the preconditioned SPDEs in the harmonic case is proved in Section \ref{sec:conv}. {In Section \ref{sec:multi-level}, we discuss the preconditioning techniques for the multi-level quantum systems.} Finally, in Section \ref{sec:num}, we discuss the minor
modifications in numerical implementations, and provide extensive
numerical tests.

\section{Continuum limit of the ring polymer representation} \label{sec:theory}


In this section, we investigate the continuum limit of the ring
polymer representation, as the number of beads goes to infinity.
We also study the continuum limits of the overdamped
and underdamped Langevin sampling schemes for the ring polymer
configurations. The preconditioning methods will be introduced to
overcome the stiffness of the dynamics when the number of beads is
large. {As we will consider the continuum limit, to distinguish,
  we will use regular font (such as $q$) for position configuration in
  $\RR^d$ for one bead, bold fonts such as $\bd{q}$ for the ring
  polymer with finite beads, and Fraktur fonts such as $\mathfrak{q}$
  for the continuous path.}


\subsection{Path integral for thermal average} \label{sec:pi_cont}

Let us consider the quantum Hamiltonian
\[ 
\wh H = \wh T+ \wh V = \frac{\wh p^2}{2}  + V(\wh q),
\]
where the particle mass is fixed as $1$ for simplicity of notation. We
consider the thermal equilibrium average, given by
\begin{equation}\label{eq:aveA} 
\langle\wh{A}\rangle = \frac{1}{\mathcal Z} \tr
[e^{-\beta \wh H} \wh A ],
\end{equation} 
for an 
observable $\wh A$, where $\beta = \frac{1}{k_B
T}$ with $k_B$ the Boltzmann constant and $T$ the absolute
temperature.

The ring polymer representation approximates  the thermal average as (up to a
normalization) an average with respect to the classical Gibbs
distribution for ring polymers on the configurational space as
\begin{equation}\label{eq:ensembleavgA2} 
  \langle \wh{A} \rangle = 
\frac{1}{\mathcal Z_N} \int_{\mathbb R^{dN}} 
 \frac{1}{N} \sum_{i=1}^N A(q_i) e^{- S_N(\bd{q})} \ud \bd q 
 + \Or(N^{-2}),
\end{equation}
where $N$ is the number of beads, the action is given by
\begin{equation}
S_N(\bd{q}) = \frac{\beta}{N} \sum_{i=1}^N \Bigl[  \frac{(q_i - q_{i+1})^2}{2 (\beta/N)^2}  +  V(q_i) \Bigr],
\end{equation}
and $\mathcal Z_N$ is a normalization constant 
\begin{equation}\label{Z_N}
  \mathcal Z_N = \int_{\mathbb R^{dN}} e^{- S_N(\bd{q})} \ud \bd q, 
\end{equation}
so that 
\begin{equation*}
  \ud \mu_N(\bd{q}) = \frac{1}{\mathcal Z_N} \exp(-  S_N ( {\bd{q}}))\ud \bd{q}
\end{equation*}
is a probability measure on the ring-polymer configurations.

Sufficiently large number of beads $N$ is needed to reduce the
asymptotic error in the ring polymer representation. However, a large
$N$ increases the stiffness associated to the ``spring potential'' in
the action:
$\frac{1}{2} \bigl(\frac{N}{\beta}\bigr)^2 (q_i - q_{i+1})^2$. In
particular, as illustrated in Figure~\ref{fig:Nring}, as the number of
beads increases, a typical ring polymer configuration exhibits small
scale oscillations. In the limit $N \rightarrow \infty$, it converges
to random loops with local regularity similar to a Brownian motion.

\begin{figure}[ht]
\begin{centering}
\includegraphics[scale=0.55]{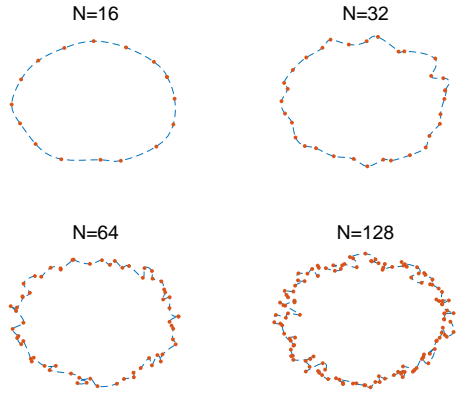} \\
\caption{Illustrative plots of ring polymers as the number of beads
  increases, made by initializing the beads evenly along a circle and
  numerically evolve the beads for a tiny time period using the
  Langevin dynamics \eqref{eq:Lang}.}\label{fig:Nring}
\end{centering}
\end{figure}

The ring polymer configuration, in the limit $N \to \infty$, can be
represented as a path $\mathfrak q(\tau):$ $[0,\beta] \mapsto \R^{d}$
with periodic boundary condition $\mathfrak q(0)=\mathfrak q(\beta)$.
From the ring polymer configuration with $N$ beads, we may construct a
corresponding path by setting $\mathfrak q(\tau_j) = q_j$ for
$\tau_j=\beta \tfrac{j}{N}$ and linearly interpolating in between. As
$N \to \infty$, we obtain
\begin{equation*}
  S_N(\bd{q}) \to \int_0^\beta \left[ \frac {|\dot {\mathfrak  q} (\tau)|^2}{2}+ V(\mathfrak q(\tau)) \right]d\tau
\end{equation*}
and 
\begin{equation*}
  \frac{1}{N}\sum_{i=1}^N A(q_i) \to \frac{1}{\beta} \int_0^{\beta} A(\mathfrak{q}(\tau)) d \tau.
\end{equation*}
Thus the path integral representation \eqref{eq:pathintegral} is
formally justified in the limit.

To make the characterization of the limit $N \to \infty$ more
rigorous, note that we can interpret $\mu_N$ as
\begin{equation*}
  \ud \mu_N(\bd{q}) \propto \exp\Bigl( - \frac{\beta}{N} \sum_{i=1}^N V(q_i) \Bigr) \ud \nu_N(\bd{q})
\end{equation*}
where $\nu_N$ is the finite dimensional Gaussian measure given by the harmonic spring potential part of $S_N$
\begin{equation}
\begin{aligned}
\ud \nu_N(\bd{q}) & \propto \exp\Bigl( - \frac{1}{2}  \frac{\beta}{N} \sum_{i=1}^N \frac{(q_i - q_{i+1})^2}{(\beta/N)^2}  \Bigr) \ud \bd{q} \\
& = \exp\Bigl( - \frac{1}{2}  \frac{\beta}{N} \bd{q} \cdot L \bd{q} \Bigr) \ud \bd{q}.
\end{aligned}
\end{equation}
Here the last equality gives the definition of $L$ as a $dN \times dN$
dimensional positive semi-definite matrix, with its lowest eigenvalue
being $0$, {associated with the eigenvector that all $q_i$'s are
  equal, so that the spring potential takes value $0$}.

To move the spectrum of $L$ away from $0$, we introduce for a fixed $\alpha > 0$
\begin{equation} \label{eq:La}
\quad L^{\alpha}=  L + \alpha I, 
\end{equation}
where $I\in \R^{dN \times dN}$ denotes the identify matrix. { To
  see why this shift is used, observe that due to the existence of the
  zero frequency mode, the measure $\nu_N(\bd{q})$ corresponding to
  the spring potential $ \frac{1}{2} \bd{q} \cdot L \bd{q}$ is not
  normalizable even in the finite dimensional cases. Adding the extra
  harmonic trapping potential makes the shifted spring potential
  $ \frac{1}{2} \bd{q} \cdot L^\alpha \bd{q}$ grow quadratically at
  infinity. We will see below that with this shift the corresponding
  measure as well as its infinite dimensional limit are properly
  defined.  }

Note that $L$ can be viewed as a second order finite difference
approximation to the operator $-\Delta$ with the periodic boundary
condition for an equidistant partition of the interval $[0,
\beta]$.
Thus as $N \rightarrow \infty$, $L^\alpha $ converges to the
differential operator
\begin{equation} \label{eq:cLa}
\mathfrak L^\alpha = \bigl( - \partial_{\tau \tau} + \alpha \bigr) \text{Id}
\end{equation}
with periodic boundary conditions and $\text{Id}$ is the identity operator from $\R^d$ to $\R^d$. 

In terms of $L^{\alpha}$, we define a new Gaussian measure 
\[
\ud \nu_N^{\alpha}(\bd{q})  \propto \exp\Bigl( - \frac{1}{2} \frac{\beta}{N} \bd{q}  \cdot L^{\alpha} \bd{q} \Bigr) \ud \bd{q},
\]
and we can rewrite the measure $\mu_N$ as 
\begin{equation}
  \ud \mu_N(\bd{q}) \propto \exp \Bigl( - \frac{\beta}{N}\sum_{i=1}^N U^{\alpha}(q_i) \Bigr) \ud \nu_N^{\alpha}(\bd{q})
\end{equation}
with $U^{\alpha}(q)= V(q)- \frac{\alpha}{2}|q|^2$, which we assume to
grow to infinite as $\abs{q} \to \infty$ (with $\alpha$ chosen
sufficiently small).  As $N \to \infty$, we have
\begin{equation*}
  \frac{\beta}{N} \bd{q} \cdot L^{\alpha} \bd{q} \to 
\int_0^{\beta} \abs{\partial_\tau \mathfrak{q}(\tau)}^2 + \alpha 
\abs{\mathfrak{q}(\tau)}^2 \ud \tau
\end{equation*}
and correspondingly the Gaussian measure $\nu_N^{\alpha}$ converges to
a well-defined infinite dimensional Gaussian measure $\nu^{\alpha}$
with mean $0$ and covariance operator given by
$\mathfrak C^\alpha=(\mathfrak L^\alpha)^{-1}$. The infinite
dimensional limit of $\mu_N$ is given therefore by
\begin{equation}
  \begin{aligned}
    \ud \mu & \propto \exp\Bigl( - \int_0^{\beta} U^{\alpha}(\mathfrak{q}(\tau)) \ud \tau \Bigr) \ud \nu^{\alpha} \\
    & \propto  \exp\Bigl( - \int_0^{\beta} \bigl( V(\mathfrak{q}(\tau)) -
    \frac{\alpha}{2} \abs{\mathfrak{q}(\tau)}^2 \bigr) \ud \tau \Bigr)
    \ud \nu^{\alpha}. 
  \end{aligned}
\end{equation}
In particular, the measure $\mu$ is absolutely continuous with respect to  the reference measure $\nu^\alpha$. 
Thus, we have 
\begin{align*}
  \langle \wh{A} \rangle &=\mathbb{E}_{\mu}\Bigl[  \frac{1}{\beta}\int_0^{\beta} A( \mathfrak q(\tau)) \ud \tau  \Bigr] \\
                         & = \frac{1}{\mathcal Z} \int  e^{ - \int_0^{\beta} U^{\alpha}(\mathfrak{q}(\tau)) \ud \tau}  
                           \frac{1}{\beta} \int_0^{\beta} A( \mathfrak q(\tau)) \ud \tau\, \nu^{\alpha}(\mathrm{d} \mathfrak q). 
\end{align*}
This gives a rigorous path integral representation of the thermal
average for quantum systems. To approximate $\langle \wh{A} \rangle$,
it thus suffices to construct some ergodic sampling schemes for the
distribution $\mu$ on the path space, which we discuss next.

\subsection{Overdamped Langevin sampling} \label{sec:overdamp}

We now consider the overdamped Langevin sampling for ring-polymers,
and the continuum limit.  We will then introduce the preconditioning
techniques to the continuum limit of the sampling equation. Those
techniques have been extensively used for efficient sampling of
diffusion bridges as in e.g., \cite{HSVW05,HSV07,BRSV08,BPSS11}, but
to our best knowledge, not for sampling of random loops arising from
the path integral representation.

\smallskip

Recall that the probability distribution $\mu_N$ can be sampled using the overdamped Langevin dynamics 
\begin{equation}
\begin{aligned}  \label{eq:olang}
\ud \bd q & =( -L \bd q - \nabla_{\bd q} V_N ) \ud t +
\sqrt{\frac{2}{\beta_N}} \ud \bd B \\
& =( -L^{\alpha} \bd q - \nabla_{\bd q} U_N^{\alpha} ) \ud t +
\sqrt{\frac{2}{\beta_N}} \ud \bd B,
\end{aligned}
\end{equation}
where we have used the short-hand notations $\beta_N = \beta / N$,
$V_N (\bd q)= \sum_{i=1}^N V(q_i)$,
$U_N^{\alpha} (\bd q)= \sum_{i=1}^N U^{\alpha}(q_i)$,
and $\bd B$ is an $\RR^{dN}$ dimensional Brownian motion with
independent components.  We observe that when $N \gg 1$, the forcing
term from the spring potential becomes quite stiff, which prevents the
use of large time steps in numerical integration of the sampling
trajectory. The restriction becomes more severe as the number of bead
increases, and thus some preconditioning treatments are desired to
enhance sampling efficiency by allowing the use of large time steps.

To gain insights for the design of the preconditioner, let us consider
the most difficult scenario as the limit $N \to \infty$, where the
Langevin dynamics converges to the following stochastic partial
differential equation (SPDE)
\begin{align} 
  \ud  \mathfrak q &= \partial_{\tau \tau } \mathfrak  q  \ud t - \nabla V( \mathfrak  q) \ud t  + \sqrt{2 } \ud w \nonumber \\
&= - \mathfrak L^\alpha \mathfrak  q \ud t - \nabla U^\alpha (\mathfrak  q) \ud t  + \sqrt{2 } \ud w, \label{eq:ospde}
\end{align}
with periodic boundary conditions in $\tau$ and $\ud w$ denotes the
space-time white noise. Here, with slight abuse of notations,
$\mathfrak q(t):\, \R^+ \to \mathcal{L} \R^{d}$ is the trajectory of
loops in the configuration space $\R^d$, where $t$ is viewed as the
temporal variable for the trajectory and $\tau$ is the parameter for
the loops. The SPDE samples loops, the continuum analog of ring
polymers as $N \to \infty$, with the invariant measure given by $\mu$,
the infinite dimensional limit of $\mu_N$.

The reason to consider the continuum limit is that the overdamped
Langevin equation \eqref{eq:olang} can be viewed as a finite
dimensional approximation of the SPDE
\eqref{eq:ospde}. Thus, the preconditioning techniques for the SPDE
will shed light on the choice of preconditioner for the overdamped
Langevin equations. In particular, as the stiffness comes from the
differential operator $\mathfrak L^{\alpha}$, a natural idea is to 
precondition the SPDE using the
inverse of $\mathfrak L^\alpha$, namely,
\begin{equation}
  {\ud  \mathfrak  q(t,\tau)}= -\mathfrak q  \ud t - (\mathfrak L^{\alpha})^{-1} \nabla U( \mathfrak q) \ud t 
  + \sqrt{2  (\mathfrak L^{\alpha})^{-1} } \ud w.
\end{equation}
Similar preconditioners have been used for diffusion bridge sampling,
see e.g., \cite{HSVW05,HSV07}.  For finite number of beads, the
natural analog of such strategy is to precondition the overdamped
Langevin equation \eqref{eq:olang} with the inverse of
$L^{\alpha}$. Hence the preconditioned equation is given by
\begin{equation}
\ud \bd q =( - \bd q - (L^{\alpha})^{-1} \nabla_{\bd q} U_N^{\alpha} ) \ud t +
\sqrt{\frac{2(L^{\alpha})^{-1}}{\beta_N}} \ud \bd B.
\end{equation}
It is easy to check that the preconditioned equation takes the same
invariant measure, while the stiff term $L^{\alpha} \bd{q}$ is now
replaced by a linear damping term, which is much easier to handle
numerically. For completeness, the inverse of $\mathfrak L^\alpha$ and
its finite dimensional approximation are discussed in Appendix~\ref{sec:cov}.



For ring polymer representation, compared with the overdamped
sampling, sampling using the underdamped Langevin dynamics is more
efficient and hence much more popular, see e.g.,
\cite{CeriottiParrinelloMarklandManolopoulos:10,LiuLiLiu:16}.  The
preconditioning strategy in the overdamped case discussed can be
extended to the underdamped case.

\subsection{Underdamped Langevin sampling} \label{sec:underdamp}

Let us now consider the preconditioning of underdamped Langevin
sampling of ring-polymers, again by the view point of taking the continuum
limit.  For path-integral molecular dynamics, auxiliary momentum
variables with artificial masses are introduced to improve the sampling
efficiency. In the augmented state space of position and momentum of
ring polymer beads, the thermal average is given by
\begin{equation}\label{eq:ensembleavgA} 
  \langle \wh{A} \rangle =
  \int_{\mathbb R^{dN} \times \mathbb{R}^{dN}}
  \frac{1}{N}\sum_{i=1}^N A(q_i) \, \pi_N(\mathrm{d}\bd{q} \ud \bd{p})+ \Or(N^{-2}),  
\end{equation}
with the Gibbs distribution
\begin{equation}
\pi_N (\mathrm{d}\bd{q} \ud \bd{p}) = \frac{1}{\mathcal Z'_N} e^{- \beta_N H_N ( \bd{q}, \bd{p})} \ud \bd{q} \ud \bd{p}
\end{equation} 
and the Hamiltonian
\begin{equation}\label{eq:Ham1}
  \begin{aligned}
    H_N (\bd q, \bd p) & = \frac{1}{2} \bd p\cdot M^{-1}
    \bd p + \sum_{i=1}^N \left[ \frac{|q_i - q_{i+1}|^2}{2 \beta_N^2}
      + V(q_i) \right]\\
    & = \frac 1 2 \bd p \cdot M^{-1} \bd p + \frac 1
    2 \bd q \cdot L^{\alpha} \bd q + U_N^\alpha (\bd q),
\end{aligned}
\end{equation}
where $\bd{q} = (q_1, \cdots, q_N)$ and $\bd{p} = (p_1, \cdots, p_N)$
are the position and momentum of the beads with the convention
$q_{N+1}=q_1$, and $\mathcal Z'_N$ is the normalization constant of
the probability distribution $\pi_N$. In the Hamiltonian, $M$ is a
positive definite fictitious mass matrix for the auxiliary momentum
variables $\bd{p}$, which should not be confused with the physical
mass of the quantum particles (which has been chosen to be $1$ from
the beginning).

It is clear from the definition that the distribution $\pi_N$ takes a product form 
\begin{equation}\label{eq:piN}
  \pi_N (\mathrm{d}\bd{q} \ud \bd{p}) = \frac{1}{\mathcal{Z}_N'} 
  \bigl( e^{-\beta_N S_N(\bd{q})} \ud \bd q \bigr) \bigl( e^{-\frac{\beta_N}{2} \bd{p} \cdot M^{-1} \bd{p}} \ud \bd p\bigr). 
\end{equation}
In particular, the marginal distribution with respect to $\bd{q}$,
which agress with $\mu_N$, is independent of the choice of the
fictitious mass matrix $M$, as long as it is positive definite. Thus
many choices can be made for the benefit of sampling efficiency.  One
common choice in the literature of path-integral molecular dynamics is
to take $M$ a constant multiple of the identity matrix $M = m I$,
where $m$ is a scalar, see
e.g., \cite{HabershonManolopoulosMarklandMiller:13}.  With this choice,
we obtain the following Langevin equation associated with the
Hamiltonian \eqref{eq:Ham1},
\begin{equation}\tag{Lang}\label{eq:Lang}
\begin{aligned} 
  {\ud \bd q}&=  \frac{\bd p}{m} \ud t;\\
  {\ud  \bd p}&= -L \bd q \ud t -\nabla_{\bd q} V_N \ud t \gamma  \bd p\ud t + \sqrt{\frac{2 \gamma  m}{\beta_N}} \ud \bd B,
\end{aligned}
\end{equation}
where $\gamma > 0$ is the friction parameter and $\bd B$ denotes a
vector of $dN$ independent Brownian motion.  {Diagonal mass
  matrix with variable entries have also been explored to adjust the
  mode frequencies \cite{TBMK93,CM93}.}  We will compare the Langevin
equation \eqref{eq:Lang} with its variants introduced in the sequel.

When the number of beads $N$ is large, the forcing $-L \bd q$ becomes
stiff which prevents the use of large time steps in numerically
integrating \eqref{eq:Lang} (in fact, as $N \rightarrow \infty$, the
allowed time step size decreases to $0$). This can be seen as in the
Hamiltonian dynamics, $\bd q(t)$ consists of both $O(1)$ and high
frequency modes, with the latter induced by the stiff spring potential
between the beads. Preconditioning of \eqref{eq:Lang} is thus required
when $N \gg 1$ for efficient sampling.

Using $L^{\alpha}$ and $U^{\alpha}$, the momentum part of \eqref{eq:Lang} can be rewritten as 
\begin{equation}
  {\ud  \bd p}=- L^\alpha \bd q \ud t -\nabla_{\bd q} U_N^{\alpha} \ud t - \gamma  \bd p\ud t + \sqrt{\frac{2 \gamma  m}{\beta_N}} \ud \bd B.
\end{equation}
Therefore, similar to what has been done in the overdamped case, we may use $(L^{\alpha})^{-1}$
to precondition the system. It is straight-forward to verify that the following preconditioned Langevin equation samples the same invariant measure $\pi_N$ (for completeness, a derivation is given in Appendix~\ref{app:inv}): 
\begin{equation}\tag{pLang}\label{eq:pLang}
  \begin{aligned}
    {\ud \bd q} &=  {\frac 1 m (L^\alpha)^{-1} \bd p} \ud t;\\
    {\ud  \bd p} &= - \bd q \ud t - (L^\alpha)^{-1} \nabla_{\bd q} U_N^{\alpha} \ud t - \gamma (L^\alpha)^{-1} \bd p\ud t + \sqrt{\frac{2 \gamma m (L^\alpha)^{-1}}{\beta_N}} \ud \bd B.
  \end{aligned}
\end{equation}
However, taking the mass matrix as a scalar multiple of the identity
matrix has issues when $N \to \infty$. In that case the
distribution $\pi_N$ as in \eqref{eq:piN} does not have a limit, since the momentum part of the resulting measure is not normalizable. On
the other hand, the limit of $\mu_N$, the marginal distribution in
$\bd{q}$, does exist and is given by $\mu$. As a result, the limiting
process of \eqref{eq:pLang} is not well defined and we may encounter
trouble when using the dynamics \eqref{eq:pLang} to sample for large
$N$. 

This issue can be overcome by a proper choice of the mass matrix.
Recall that after all the choice is arbitrary for finite dimensional
systems as long as the mass matrix is positive definite.  In particular, inspired by the work \cite{BPSS11} which
considers hybrid Monte Carlo methods in infinite dimension, we make
the choice $M=L^\alpha$, which leads to the following mass-modified
Langevin (mmLang) dynamics 
\begin{equation}\tag{mmLang} \label{eq:mmLang}
\begin{aligned} 
  {\ud \bd q}&=  {(L^\alpha)^{-1} \bd p} \ud t;\\
    {\ud  \bd p}&= -L^\alpha  \bd q \ud t -\nabla_{\bd q}  U_N^{\alpha} \ud t  - \gamma  \bd p\ud t + \sqrt{\frac{2 \gamma  L^\alpha}{\beta_N}} \ud \bd B.
\end{aligned}
\end{equation}
The choice of the mass matrix $M = L^{\alpha}$ is made, since
  after a change of variable, it leads to a sampling measure with
  well-defined infinite dimensional limit; and also gives rise to a
  preconditioned Langevine dynamics with superior properties
 Let us multiply the momentum part of
\eqref{eq:mmLang} by $(L^\alpha)^{-1}$ and further introduce the
velocity variable $\bd v = M^{-1} \bd p=(L^\alpha)^{-1} \bd p$, we
obtain the preconditioned mass-modified Langevin (pmmLang) dynamics
\begin{equation}\tag{pmmLang} \label{eq:pmmLang}
\begin{aligned} 
  {\ud \bd q}&=  { \bd v} \ud t;\\
  {\ud  \bd v}&= -  \bd q \ud t - (L^\alpha)^{-1} \nabla_{\bd q}  U_N^{\alpha} \ud t  - \gamma  \bd v\ud t + \sqrt{\frac{2 \gamma  (L^\alpha)^{-1}}{\beta_N}} \ud \bd B.
\end{aligned}
\end{equation}
In the limit $N \to \infty$, the system \eqref{eq:pmmLang} converges
to the following SPDE
\begin{align} \label{eq:pmmlspd}
  {\ud  \mathfrak q(t,\tau)}&=  {  \mathfrak v } \ud t; \nonumber\\ 
  {\ud  \mathfrak v(t,\tau)}&= - \mathfrak  q \ud t - \mathfrak C^\alpha \nabla_{\mathfrak  q}  U^{\alpha} \ud t  - \gamma \mathfrak  v \ud t + \sqrt{{2 \gamma \mathfrak C^\alpha }} \ud w, 
\end{align}
where $\ud w$ is the space-time white noise, and thus
$\sqrt{{ \mathfrak C^\alpha }} \ud w$ is the cylindrical
$\mathfrak C^\alpha$-Wiener process in probability terms (see e.g.,
\cite{DZZ96,Hairer16}).  Under some technical assumptions on $V$, it
can be proved that the phase space distribution
\[
 \pi'_N (\bd q, \bd v)  \propto \exp\left( - \beta_N S_N  (\bd q) \right)  \exp\left( - \beta_N \frac  1 2 \bd v \cdot L^{\alpha} \bd v \right)
\]
converges to a well defined probability distribution $\pi'$ as
$N \rightarrow \infty$, with marginal distribution
$\mu{\mathrm{d} \mathfrak q}$.  Moreover, the the SPDE system
\eqref{eq:pmmlspd} takes
$\pi'(\mathrm{d} \mathfrak q \ud\mathfrak v)$ as the invariant measure. We
will not go into the details here.

To conclude this section, we remark that both \eqref{eq:pLang} and
\eqref{eq:pmmLang} have preconditioned the underdamped Langevin
dynamics: in either system, the stiff forcing term $L^{\alpha} \bd q$
is replaced by a linear damping force term. However, only
\eqref{eq:pmmLang} has a well-defined continuum limit as the number of
beads goes to infinity. We shall study their numerical performances in
Section \ref{sec:num} when large number of beads are needed.


\section{Normal modes analysis and Convergence in the continuum limit} \label{sec:conv}


%
%

In this section we compare our preconditioning approaches with the more familiar
  normal modes, staging coordinates and Matsubara modes in the
  literature of path integral molecular dynamics.  Since those
  representations are introduced mostly for the Hamiltonian part (but
not the thermostating such as Langevin), we will compare them with the
proposed approach in the context of the Hamiltonian
dynamics.  Also, we show that when the potential is quadratic, 
the continuum limit of \eqref{eq:pmmLang} system converges exponentially fast to its invariant measure, which implies that the \eqref{eq:pmmLang} system converges to the invariant measure exponentially with a convergence rate that is independent of the system size. 


\subsection{Normal modes, staging coordinates and preconditioning} \label{sec:com}

Recall that with the artificial mass matrix $M$, the Hamiltonian in
PIMD is given by
\begin{align}
H_N & =  \frac 1 2 \bd p \cdot  M^{-1} \bd p + \frac 1 2 \bd q \cdot L \bd q +V_N (\bd q) \cr 
& =:  H^\alpha+ U^\alpha_N,
\end{align}
where
$H^\alpha = \frac 1 2 \bd p \cdot M^{-1} \bd p + \frac 1 2 \bd q \cdot L^\alpha \bd
q$
is the harmonic part.  The corresponding Hamiltonian dynamics is given
by
\begin{align}
\frac{\ud}{\ud t} \bd q &= M^{-1} \bd p; 
\nonumber \\
\frac{\ud}{\ud t} \bd p & =  - L \bd q - \nabla_{\bd q} V_N(\bd q) = - L^\alpha \bd q - \nabla_{\bd q} U^\alpha_N(\bd q).   \label{eq:Hp} 
\end{align}
Let us introduce the normal modes $(\wt p_k, \wt q_k)$ given by (we assume $N$ is odd to simplify the algebra)
\[
\widetilde p_k = \sum_{j=1}^N p_j D_{jk} \quad \text{and} \quad \widetilde q_k = \sum_{j=1}^N q_j D_{jk}, \quad
 k=0, \pm 1, \cdots, \pm \frac{N-1}{2}, 
\]
with the transformation matrix defined as
\begin{equation*}
  D_{jk}=  
  \begin{cases}
    \sqrt{1/N}, & k=0,  \\
    \sqrt{2/N} \sin\bigl( \frac{2 \pi jk}{N} \bigr), & 0<k\le \frac{N-1}{2}, \\
    \sqrt{2/N} \cos \bigl( \frac{2 \pi j k}{N}\bigr), & -
    \frac{N-1}{2} \le k< 0.
  \end{cases}
\end{equation*}
It is easy to check that $D$ is an orthogonal matrix, hence the
inverse transform is given by
\[
p_j = \sum_{k=-(N-1)/2}^{(N-1)/2}D_{jk} \widetilde p_k,   \quad  q_j =  \sum_{k=-(N-1)/2}^{(N-1)/2}D_{jk} \widetilde q_k. 
\]
Using matrix notations, we have
\[
\widetilde {\bd q} = D^T \bd q, \quad  \widetilde {\bd p} = D^T \bd p, \quad \text{and} \quad \bd q = D \, \widetilde {\bd q}, \quad \bd p = D \, \widetilde {\bd p}. 
\]

In the normal mode representation, when the mass matrix is chosen as
$M=m I$, $H_N$ can be written as 
\[
H_N(\widetilde {\bd q}, \widetilde {\bd p})= \sum_{k=-(N-1)/2}^{(N-1)/2} \left[ \frac{\widetilde p_k^2}{2m} +  \frac{1}{2} \omega_k^2 \widetilde q_k^2 \right]+\widetilde{V}_N(\widetilde {\bd q}),
\]
where 
\begin{align*}
\omega_k & =\frac{2}{\beta_N} \sin\left( \frac{k\pi}{N}\right),\\ 
\widetilde{V}_N(\widetilde {\bd q}) & = \sum_{l=1}^N V\left(  \sum_{k=-(N-1)/2}^{(N-1)/2}D_{jk}  \widetilde q_k \right).
\end{align*}
{In particular, we observe that $\omega_0=0$, which is consistent with the fact that the $L$ matrix has the lowest eigenvalue $0$. The corresponding spatial variable
\[
\widetilde q_0 =\frac{1}{N} \sum_{j=1}^N q_j
\]
is the centroid of the ring polymer. In Hamiltonian dynamics, the spring potential vanishes for the centroid mode, and the associated momentum  changes according to the averaged force due to the external potential 
\[
\frac{\ud}{\ud t} \widetilde q_0= \frac 1 m \widetilde p_0,\quad \frac{\ud}{\ud t} \widetilde p_0 = - \frac{1}{N} \sum_{j=1}^N V'(q_j(\widetilde {\bd  q})).
\]
The force from the spring potential shows up in momentum equation for the rest of the normal modes, which becomes dominant when $|k|$ is large. 
} 

The Hamiltonian dynamics of normal modes reads
\begin{align}
\frac{\ud}{\ud t} \widetilde{\bd q} &= \frac{1}{m} \widetilde{\bd p}; \nonumber \\
\frac{\ud}{\ud t} \widetilde{\bd p} & =  - D^T L D \widetilde{ \bd q} -  D^T \nabla_{\bd q} V(D \widetilde{ \bd q}) \\
& =  - D^T L^\alpha D \widetilde{ \bd q}-  D^T \nabla_{\bd q} U^\alpha_N(D \widetilde{ \bd q}). \nonumber
\end{align}
The choice of $D$ ensures that $D^TLD$ is a diagonal matrix with
entries $ \omega_k^2$, and also $D^T L^\alpha D$ is a diagonal matrix
with entries $\omega_k^2+ \alpha$. Therefore, with the use of the
normal modes, the stiff part of the Hamiltonian dynamics, namely
$L \bd q$, is diagonalized. This of course facilitates the design of
accurate numerical scheme, but it does not, however, reduce the
stiffness of the dynamics, because $D^T L D$ and $L$ share the same
eigenvalues. When the dimension is large, the fast modes lead to
severe stability constraints which prevent efficient sampling.

With the normal mode transformation, we obtain that the harmonic part
of the original Hamiltonian (i.e., when we neglect $U^\alpha$) contains modes with frequency from $\sqrt{\alpha/m}$ to
$\mathcal O(N)$. { We remark that, the diagonal mass matrix  with variable components have been considered in previous works to tune the frequencies of the normal modes (see e.g. \cite{CM93}), but centroid mode is treated differently from the rest of the normal modes since the spring potential vanishes for that mode. In particular, when $k\ne 0$, the masses of the normal modes can be chosen such that effectively those modes have the same frequency.} 


{
The staging coordinates $( u_k )$ for the position variables $(q_k)$ introduced in \cite{TBMK93} lead to another useful transformation in the ring polymer representation, which is given by 
\begin{equation}\label{eq:q2u}
u_1=q_1, \quad u_k=q_k- \frac{(k-1)q_{k+1}+q_1}{k}, \quad
k=2,\cdots, N.
\end{equation}
The inverse transform is 
\begin{equation}\label{eq:u2q}
q_1=u_1, \quad q_k=u_1+ \sum_{k'=k}^N \frac{k-1}{k'-1}u_{k'}, \quad k=2,\cdots, N.
\end{equation}
 With matrix notations, we have 
\[
\bd u= D_1 \bd q, \quad \bd q = D_2 \bd u = (D_1)^{-1} \bd u,
\]
where the transform matrices are defined according to \eqref{eq:q2u} and \eqref{eq:u2q} respectively.
With the staging coordinates, the Hamiltonian can be rewritten as
\begin{align}\label{eq:Ham_staging}
H_N & = \sum_{k=1}^N  \left[ \frac{ \tilde v_k^2}{2\tilde{m}_k} +V(q_k(\mathbf u))\right]+\sum_{k=2}^N \frac{m_k}{2 \beta_N^2} u_k^2 \\
& =\frac 1 2  \tilde {\bd v} \cdot \widetilde{\bd M}^{-1} \tilde {\bd v} + \frac 1 {2 \beta_N^2} \bd u \cdot \bar {\bd M} \bd u+\sum_{k=1}^N V(q_k(\mathbf u)). \nonumber 
\end{align}
Here, $m_k=\frac {k}{k-1}$ for $k=2,\cdots,N$, $\tilde v_k$ are the auxiliary momentum variables for the staging coordinates, and $\tilde{m}_k>0$ are artificial masses to be prescribed.  The $\widetilde{\bd M}$ matrix is a diagonal matrix with $\tilde{m}_k$ as components, and $\bar {\bd M}$ is also diagonal with $0$ as the first component and $m_k$ for the rest. Note that the staging transform diagonalize the spring potential in the following sense
\[
 \frac{1}{\beta_N^2} \bar {\bd M} = D_1 L D_2=D_1 L (D_1)^{-1}.
\]
The associated Hamiltonian dynamics of the harmonic part are given by (i.e. dropping $V$)
\begin{align}
\frac{\ud}{\ud t} {\bd u} &= \widetilde{\bd M}^{-1}  \tilde{\bd v}; \label{eq:u}\\
\frac{\ud}{\ud t} \tilde{\bd v} & =  - \frac{1}{\beta_N^2} \bar {\bd M} \bd u. \label{eq:tv}
\end{align}
We observe that, the spring potential vanishes for the first staging variable $u_1$, and the harmonic part of the Hamiltonian all have $O(N)$ frequency except for the first mode.
Similar to the normal mode representation, authors in \cite{TBMK93,CM93,LiuLiLiu:16} proposed to use the artificial mass matrix  $\widetilde{\bd M}$ to adjust all the modes except the first one to the same frequency. This is be achieved simply by choosing $\tilde{m}_k=m_k$ for $k=2,\cdots,n$.
}

{
Note that, the auxiliary momentum variables $ \tilde {\bd v}$ for the staging variables $\bd u$ are not the same as the momentum variables $\bd p$ for the (Cartesian) position variables $\bd q$. To find the connections, we first multiply \eqref{eq:tv} from the left by $D_2$, and we get
\[
\frac{\ud}{\ud t} D_2 \tilde{\bd v}  =  - L \bd q. 
\]
By comparing with the harmonic part of \eqref{eq:Hp},
this equations implies the connection between $\tilde{\bd v}$ and $\bd p$,
\[
D_2 \tilde{\bd v} =\bd p.
\]
And with this change of variable, if we multiply \eqref{eq:u} from the left by $D_2$, we get
\[
\frac{\ud}{\ud t} {\bd q} =( D_2 \widetilde{\bd M} D_1)^{-1}  {\bd p}.
\]
This shows, the staging coordinates actually implies choosing a non-diagonal mass matrix $M_{\text{eff}}$, which is given by
\[
M_{\text{eff}}=D_2 \widetilde{\bd M} D_1 = \beta_N^2 L + \tilde{m}_1  E_1. 
\]
where $E_1$ is a square matrix with the elements in the first column equal $1$, and the rest of the elements are $0$.
}

{
From the analysis above, we see that the system with the staging coordinate representation and the preconditioned mass-modified system introduced in Section~\ref{sec:underdamp} share a few common features: both can be viewed as choosing non-diagonal mass matrices which are accompanied by changes of variables, such that the harmonic part of the dynamics are diagonalized. But the choices of the mass matrices and the changes of variables are different, and unlike the preconditional mass-modified system, the staging representation leads to {high frequency modes} when $\beta_N \ll 1$. We shall further compare those two approaches in terms of adjusting frequencies later in this section.
}

Let us now compare with the two proposed
preconditioned dynamics \eqref{eq:pLang} and \eqref{eq:pmmLang}.

The Hamiltonian part of the system \eqref{eq:pLang} is given by
\begin{align} 
  \frac{\ud }{\ud t}{ \bd q}&=  {\frac 1 m (L^\alpha)^{-1} \bd p};\\
    \frac{\ud }{\ud t} {\bd p}&=- \bd q - (L^\alpha)^{-1} \nabla_{\bd q} U_N^{\alpha} .
\end{align}
In the harmonic case that $U^{\alpha}=0$, it reduces to
\[
  \frac{\ud^2}{\ud t^2}{ \bd q}=-\frac 1 m (L^\alpha)^{-1} \bd q.
\]
Note that $L^{\alpha}$ has the smallest eigenvalue $\alpha$, therefore
$(L^\alpha)^{-1}/m$ has the largest eigenvalue $(\alpha m)^{-1}$ and
the smallest eigenvalue $\bigl ((\alpha+\omega_N^2 )m\bigr)^{-1}$.
Thus, with the help of the $\alpha$-regularization, the frequency of
the fastest mode has a uniform upper bound no matter how large $N$
is. In other words, increasing the number of beads only adds more slow
modes of the Hamiltonian system, which does not cause numerical
stability constraints. Although the frequency ratio between the
fastest mode and the slowest mode (i.e., condition number for the
linear system) is not changed in the system \eqref{eq:pLang}: while it
allows for larger step sizes, it also takes longer time for the slow
modes to equilibrate. In terms of sampling, the system
\eqref{eq:pLang} is still superior to the system  \eqref{eq:Lang}, since the
low-frequency modes in \eqref{eq:Lang} are more important for accurate
approximation of the thermal average, which are mapped to the
high-frequency modes in the system \eqref{eq:pLang} with a uniform
frequency upper bound $(\alpha m)^{-1/2}$.  In other words, the
preconditioner in system \eqref{eq:pLang} prioritize the modes which
matter the most in sampling, at the prize of slowing the less
important modes.

{
We next look at two mass modified Langevin systems. The Hamiltonian part of \eqref{eq:mmLang} reads
\begin{equation}
\begin{aligned} 
  {\ud \bd q}&=  {(L^\alpha)^{-1} \bd p} \ud t;\\
    {\ud  \bd p}&= -L^\alpha  \bd q \ud t -\nabla_{\bd q}  U_N^{\alpha} \ud t,
\end{aligned}
\end{equation}
which reduces to a collection of independent oscillators with the same
frequency in the harmonic case ($U^{\alpha} = 0$)
\begin{equation}\label{eq:osc}
  \frac{\ud^2}{\ud t^2}{ \bd q}=- \bd q.
\end{equation}
The Hamiltonian part of \eqref{eq:pmmLang}
\begin{align}
   \frac{\ud }{\ud t}{ \bd q}&=  { \bd v} \ud t;\\
    \frac{\ud }{\ud t}{\bd v}&= -  \bd q \ud t - (L^\alpha)^{-1} \nabla_{\bd q} U_N^{\alpha},
\end{align}
lead to the same oscillator equations \eqref{eq:osc} in the harmonic case ($U^{\alpha} = 0$).
Therefore, the harmonic part of the two Hamiltonians only contain
unitary oscillations, regardless of the number of the beads $N$. The
choice of the mass matrix $M=L^{\alpha}$ automatically adjusts all
modes to the same frequency, so that the stability constraint is
removed and at the same time greatly reduce the condition number. This
is the main advantage of the choice of the mass matrix as
$L^{\alpha}$.
}

{
Compared with the normal mode representation and the staging coordinate representation, the mass matrix choice $M=L^{\alpha}$ (which is not a diagonal matrix) can help to adjust all mode to the same frequency $1$ without any exception. We remark that, tuning all modes to the same frequency, however, is not sufficient for improved numerical stability. In fact, the harmonic part of the \eqref{eq:mmLang} system also have uniform frequency, but as we shall see in Section \ref{sec:stability}, this system actually shows worse performances in terms of stability. The change of variable from $\bd p$ to $\bd v$ is also necessary in the preconditioning process. The essential difference between the \eqref{eq:mmLang} system and the the \eqref{eq:pmmLang} system is that the invariant measure of $(\bd q, \bd v)$ for \eqref{eq:pmmLang} does have a well-defined infinite dimensional limit, while the invariant measure of $(\bd q, \bd p)$ for \eqref{eq:mmLang} does not.
}

From another perspective, we can also understand the choice of the
mass matrix $M=L^\alpha$ using the normal mode representation. Let us
multiply the Hamiltonian system \eqref{eq:Hq}-\eqref{eq:Hp} by $D^T$
from the left and get
\begin{align}
\frac{\ud}{\ud t} \widetilde{\bd q} &= D^T M^{-1} D \widetilde{\bd p}; \\
\frac{\ud}{\ud t} \widetilde{\bd p} & =  - D^T L^\alpha D \widetilde{ \bd q} -  D^T \nabla_{\bd q} U^{\alpha}(D \widetilde{ \bd q}).
\end{align}
Recall that $D^TL^{\alpha}D$ is a diagonal matrix with positive
entries $\lambda^{\alpha}_k= \omega_k^2+\alpha$.  To adjust all the
normal modes to the same frequency, we require $D^T M^{-1} D$
is also diagonal, and its diagonal entries are exactly
$(\lambda^{\alpha}_k)^{-1}$, namely, if $U^{\alpha}=0$,
\begin{equation} \label{eq:masschoice}
\frac{\ud}{\ud t} \widetilde q_k = (\lambda^{\alpha}_k)^{-1} \widetilde p_k, \quad \frac{\ud}{\ud t} \widetilde  p_k=- \lambda^{\alpha}_k \widetilde q_k.
\end{equation}
This clearly leads to the condition $
D^T M^{-1} D=(D^T L^\alpha D)^{-1}$,
which results in  the choice of the mass matrix $M=L^\alpha$.

Then, to precondition the system \eqref{eq:masschoice}, we introduce $\widetilde v=(\lambda^{\alpha}_k)^{-1} \widetilde p_k$, such that the Hamiltonian dynamics \eqref{eq:masschoice} reduces to
\[
\frac{\ud}{\ud t} \widetilde q_k =  \widetilde v_k, \quad \frac{\ud}{\ud t} \widetilde  v_k=-  \widetilde q_k.
\]
In the matrix form, the change of variable becomes
\[
\widetilde {\bd v} = D^T (L^\alpha)^{-1} D \widetilde {\bd q}.
\]
This gives 
\[
\bd v= D\widetilde {\bd v} =D D^T (L^\alpha)^{-1} D \widetilde {\bd q} = (L^\alpha)^{-1}\bd q,
\]
which recovers the choice of the velocity variable in
\eqref{eq:pmmLang}.

\subsection{Matsubara modes and the continuum limit} \label{sec:mats}

Let us now consider the Matsubara modes, introduced in
\cite{Matsubara,HeleWillattMuoloAlthorpe:15a,HeleWillattMuoloAlthorpe:15b}.
We denote the Matsubara coordinates
$\overline{ \bd q }=(\overline {q}_k)$,
$\overline{\bd p}=(\overline {p}_k)$,
$k= -\frac{K-1}{2},\cdots,\frac{K-1}{2}$ (we take $K$ as an odd number
for simplicity).  The Matsubara modes are defined to be the $K$ lowest
normal models in the ring polymer representation in the limit that the
number of beads (and hence normal modes) goes to infinity. The
Matsubara modes are thus expected to be connected to the continuum
limit.

As is shown in \cite{HeleWillattMuoloAlthorpe:15a}, the effective
Hamiltonian in the Matsubara coordinates is given by
\[
\overline{H}_K= \sum_{k=-(K-1)/2}^{(K-1)/2}  \left[ \frac{\overline {p}_k^2}{2m} +  \frac{1}{2} \overline{ \omega}_k^2 \overline{ q}_k^2 \right]+ \overline{V}_K (\overline {\bd q}),
\]
where 
\begin{align*}
  \overline{ \omega}_k &= \frac{2 k \pi}{\beta}= \lim_{N\rightarrow \infty} \omega_k, \\
  \overline{V}_K(\overline {\bd q}) &= \lim_{N \rightarrow \infty}\frac{1}{N}\sum_{\ell=1}^N V \Bigl(  \sqrt{N} \sum_{k=-(K-1)/2}^{(K-1)/2}  D_{\ell k}  \overline q_k \Bigr).
\end{align*}
Thus, the corresponding Hamiltonian dynamics reads
\begin{align}
\frac{\ud}{\ud t} \overline{\bd q} &= \frac{1}{m} \overline{\bd p};  \nonumber\\
\frac{\ud}{\ud t} \overline{\bd p} & =  - \Lambda \overline{ \bd q} -  \nabla_{\overline{\bd q}} \overline{V}_K, \label{eq:mats}
\end{align}
where $\Lambda$ is a diagonal matrix with entries $ \overline \omega_k^2$.

The Hamiltonian dynamics in the Matsubara modes can be viewed as a
finite dimensional approximation of the continuum limit. For
simplicity, we take the potential function $V=0$ and the mass matrix
$M=mI$, and similar to the previous analysis, the continuum limit of
the Hamiltonian dynamic is given by
\begin{align}
\frac{\partial}{\partial t} { \mathfrak q(t,\tau)} &= \frac{1}{m}\mathfrak  p (t,\tau); \nonumber\\
\frac{\partial}{\partial t}\mathfrak  p(t,\tau) & =  - \mathfrak L \mathfrak q (t,\tau)= \frac{\partial^2}{\partial \tau^2} \mathfrak q(t,\tau). 
\end{align}
The operator $\mathfrak L$ with periodic boundary condition has eigenvalues 
\[
\overline \lambda_k=(\overline \omega_k)^2=  \left( \frac{2 k \pi}{\beta} \right)^2, \quad k=0,1,\cdots
\]
where $\overline \lambda_0$ has one linearly independent
eigenfunction, denoted by $\phi_0(\tau)$, and for $k \ge 1$, each
$\overline \lambda_k$ is associated with two linearly independent
eigenfunctions, denoted by $\phi_{\pm k}(\tau)$.  Thus, if we take the
following finite dimensional approximation
\[
\mathfrak q(t,\tau) \approx \sum_{k=-K'}^{K'} \hat q_k (t) \phi_k(\tau), \quad \mathfrak p(t,\tau) \approx \sum_{k=-K'}^{K'} \hat p_k (t) \phi_k(\tau).
\]
By orthogonality of the eigenfunctions, we obtain the ODE system for the coefficients
\[
\frac{\ud}{\ud t} \hat q_k = \frac 1 m \hat p_k, \quad \frac{\ud}{\ud t} \hat p_k=- \overline \lambda_{|k|} \hat q_k.
\]
This system agrees with the Hamiltonian system in the Matsubara coordinates as in \eqref{eq:mats} when $V=0$ and $K'=\frac{K-1}{2}$. 

While a finite dimensional approximation as using the Matsubara modes
reduces the stiffness of the system.  The frequency of the Matsubara
mode still ranges from $0$ to $\overline \omega_{(K-1)/2}$. Thus, when
$K$ is chosen large to improve accuracy, it also brings in similar
stability constraints.

To better compare the mode frequencies in different models, we plot in
Figure~\ref{fig:frequency} the distributions of the frequencies of the
normal modes and the Matsubara modes, compared with the frequencies of
the modes in (the harmonic part of) the \eqref{eq:pLang} and
\eqref{eq:pmmLang} systems. 
For the diagonal-mass preconditioned system \eqref{eq:pLang}, the use
of $(L^\alpha)^{-1}$ as a preconditioner maps the the fast modes to
the slow modes, while the upper bound of the mapped frequencies is
independent of the number of beads.  For the mass-modified system
\eqref{eq:pmmLang}, all modes are mapped to unitary frequency. Hence,
with those two preconditioning approaches, we no longer suffer
stability constraints as the number of beads increases. Moreover,
since the condition number is reduced to $1$ for the
\eqref{eq:pmmLang}, we expect the preconditioned mass-modified system
works the best in sampling efficiency.

\begin{figure}[ht]
\begin{centering}
\includegraphics[scale=0.55]{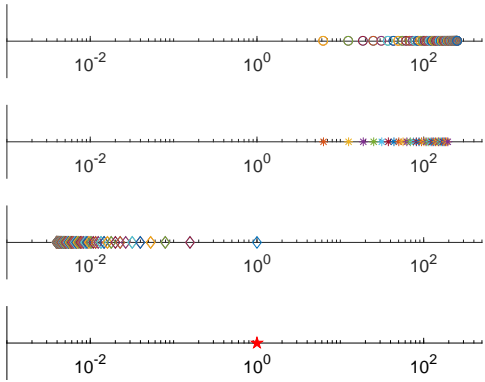} \\
\caption{Distribution of the mode frequencies in log scale. From top to bottom: normal models ($m=1$,  $N=63$); Matsubara modes ($m=1$, $K=31$); modes of the Hamiltonian part of the \eqref{eq:pLang} system ($\alpha=1$, $m=1$, $N=63$); modes of \eqref{eq:pmmLang} sytem ($\alpha=1$).}
\label{fig:frequency}
\end{centering}
\end{figure}


\subsection{Convergence of the \eqref{eq:pmmLang} dynamics: the harmonic case}\label{sec:conv}

We have compared in the last section  our preconditioning approaches  with other existing preconditioning methods for the Hamiltonian dynamics. In the following, we switch gear back to the Langevin dynamics, and show that when the potential function is quadratic, the SPDE \eqref{eq:pmmlspd} converges to its invariant measure exponentially fast. We can also conclude from such analysis that the \eqref{eq:pmmLang} system exponentially converges  to its invariant measure with a rate that is independent of the bead number. 

In the harmonic case ($U^\alpha=0$), the   SPDE \eqref{eq:pmmlspd} reduces to
\begin{align} \label{eq:pmmlspdeh}
  {\ud  \mathfrak q(t,\tau)}&=  {  \mathfrak v } \ud t; \nonumber\\
  {\ud  \mathfrak v(t,\tau)}&= - \mathfrak  q \ud t  - \gamma \mathfrak  v \ud t + \sqrt{{2 \gamma \mathfrak C^\alpha }} \ud w. 
\end{align}
Formally, the invariant measure of the dynamics above is given by
\begin{equation} \label{eq:invh}
\Pi (\ud \mathfrak q, \ud \mathfrak p) =N(0, \mathfrak C^\alpha ) \times N(0, \mathfrak C^\alpha ).
\end{equation}
In the following, we aim to rigorously show that,  the law $\mathcal L (\mathfrak q(t), \mathfrak v(t))$ of the SPDE \eqref{eq:pmmlspdeh} converges to the invariant measure $\Pi$  exponentially fast with an explicit rate.

\begin{theorem} \label{thm:est}
Let $\mathcal L (\mathfrak q(t), \mathfrak v(t))$  be the law $\mathcal L (\mathfrak q(t), \mathfrak v(t))$ of the SPDE \eqref{eq:pmmlspdeh} with zero initial condition. Then 
\begin{equation} \label{est:conv}
\mathcal{W}_2(\mathcal{L}(\mathfrak q(t), \mathfrak v(t)), \Pi)  \lesssim e^{-{\theta (t)}},
\end{equation}
where the exponent $\theta(t)$ is given by
\[
\theta (t)=\left\{
\begin{split}
& \frac{\gamma }{2}t, & 0<\gamma<2,  \\
& t- \ln t, & \gamma=2,\\
&\frac{\gamma-\sqrt{\gamma^2-4}}{2}t, & \gamma>2.
\end{split}  \right.
\]
In \eqref{est:conv}, we use $a \lesssim b $ to denote that $a \leq C b $ for some generic constant $C$ which is independent of $t$ and $\gamma$.
\end{theorem}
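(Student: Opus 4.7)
The plan is to Fourier-diagonalize the linear SPDE \eqref{eq:pmmlspdeh}, which reduces the dynamics to a countable family of independent two-dimensional Gaussian SDEs, then establish exponential convergence mode by mode via a synchronous coupling, and finally sum the mode-wise bounds. Let $\{\phi_k\}_{k\in\Z}$ be the orthonormal Fourier basis of $L^2([0,\beta];\RR^d)$ diagonalizing $\mathfrak L^\alpha$ with eigenvalues $\lambda_k = (2\pi k/\beta)^2 + \alpha$, so that $\mathfrak C^\alpha \phi_k = \lambda_k^{-1}\phi_k$. Writing $\mathfrak q(t) = \sum_k q_k(t)\phi_k$, $\mathfrak v(t) = \sum_k v_k(t)\phi_k$, and expanding the cylindrical Wiener process as $\ud w = \sum_k \ud W_k\,\phi_k$ for independent scalar Brownian motions $W_k$, the SPDE decouples into
\begin{equation*}
\ud q_k = v_k\,\ud t,\qquad \ud v_k = -q_k\,\ud t - \gamma v_k\,\ud t + \sqrt{2\gamma/\lambda_k}\,\ud W_k.
\end{equation*}
The invariant measure factorizes as $\Pi = \bigotimes_k \Pi_k$ with $\Pi_k = N(0,\lambda_k^{-1}I_2)$, and the Wasserstein distance splits as $\mathcal W_2^2(\mathcal L(\mathfrak q(t),\mathfrak v(t)),\Pi) = \sum_k \mathcal W_2^2(\mathcal L(q_k(t),v_k(t)),\Pi_k)$.

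Setting $X_k = (q_k,v_k)^T$ and $A = \begin{pmatrix} 0 & 1 \\ -1 & -\gamma\end{pmatrix}$, I would construct a synchronous coupling by letting $Y_k$ solve the same two-dimensional SDE with the same driving $W_k$ but started from $Y_k(0)\sim\Pi_k$ independent of the noise, so $Y_k(t)\sim\Pi_k$ for all $t$. The noise cancels in the difference, and $D_k := X_k - Y_k$ satisfies the deterministic linear ODE $\dot D_k = A D_k$, yielding from $X_k(0)=0$
\begin{equation*}
\mathcal W_2^2(\mathcal L(X_k(t)),\Pi_k) \le \mathbb E\|X_k(t)-Y_k(t)\|^2 = \mathbb E\|e^{At}Y_k(0)\|^2 = \lambda_k^{-1}\|e^{At}\|_F^2.
\end{equation*}
The three regimes of the theorem then arise from elementary spectral analysis of $A$, whose characteristic polynomial $\mu^2+\gamma\mu + 1$ has roots $\mu_\pm = (-\gamma\pm\sqrt{\gamma^2-4})/2$: for $0<\gamma<2$ the eigenvalues are complex conjugates with real part $-\gamma/2$, giving $\|e^{At}\|_F \lesssim e^{-\gamma t/2}$; for $\gamma = 2$ a Jordan block at $-1$ produces $\|e^{At}\|_F \lesssim te^{-t}=e^{-(t-\ln t)}$; for $\gamma>2$ the slowest real eigenvalue $(-\gamma+\sqrt{\gamma^2-4})/2$ dominates. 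In every case $\|e^{At}\|_F \lesssim e^{-\theta(t)}$ with $\theta$ as stated.

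Since $\lambda_k\sim k^2$ for large $|k|$ we have $\sum_k\lambda_k^{-1}<\infty$, so summing the mode-wise bound yields
\begin{equation*}
\mathcal W_2^2(\mathcal L(\mathfrak q(t),\mathfrak v(t)),\Pi) \lesssim e^{-2\theta(t)}\sum_k\lambda_k^{-1} \lesssim e^{-2\theta(t)},
\end{equation*}
and taking square roots gives the claim. The only genuinely delicate step is the infinite-dimensional setup — verifying that the cylindrical $\mathfrak C^\alpha$-Wiener process admits the Fourier expansion above, that the solution lives in a function space making the per-mode decomposition of $\mathcal W_2$ rigorous, and that the tensor product $\Pi$ is the invariant measure of \eqref{eq:pmmlspdeh}. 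Once this framework is in place, everything reduces to the finite-dimensional Lyapunov analysis of a scalar linear Gaussian SDE, which is classical; the novelty is that the preconditioning has converted the problem into a sequence of such SDEs whose drift matrix $A$ is identical for every mode, which is exactly why the resulting convergence rate is independent of the mode index and hence of the bead number.
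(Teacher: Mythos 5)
Your proposal is correct, and it shares the paper's overall framework — expand in the eigenbasis of $\mathfrak L^\alpha$, observe that \eqref{eq:pmmlspdeh} decouples into independent two-dimensional linear Gaussian SDEs whose drift matrix $A=\begin{pmatrix}0 & 1\\ -1 & -\gamma\end{pmatrix}$ is the \emph{same} for every mode, and sum the mode-wise Wasserstein bounds using that $\mathfrak C^\alpha$ is trace class — but the per-mode estimate is obtained by a genuinely different device. The paper solves each mode equation explicitly (diagonalizing $A$, writing the solution as stochastic integrals), computes the time-dependent variances $\Var_{q,j}(t)$, $\Var_{v,j}(t)$ with their correction terms $c_{j,1}$, $c_{j,2}$, and then bounds the $\mathcal W_2$ distance between one-dimensional centered Gaussians by the difference of variances, treating the three regimes of $\gamma$ by separate computations. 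You instead run a synchronous coupling with a stationary copy $Y_k(0)\sim\Pi_k$ driven by the same noise, so the noise cancels and $\mathcal W_2^2(\mathcal L(X_k(t)),\Pi_k)\le \mathbb E\|e^{At}Y_k(0)\|^2=\lambda_k^{-1}\|e^{At}\|_F^2$; all three regimes then follow at once from the spectrum (or Jordan form) of $A$. Your route is shorter and more robust — it avoids the explicit variance bookkeeping, works verbatim for any square-integrable initial condition, and makes transparent why the rate is mode-independent — while the paper's computation buys the exact transient variances, which it also uses to remark that the analogous mode-wise rates for \eqref{eq:Lang} and \eqref{eq:pLang} degenerate as $j\to\infty$. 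Two shared caveats, neither of which puts you behind the paper: your implied constants (from the eigenvector/Jordan transformation of $A$) depend on $\gamma$, exactly as the paper's $c_\gamma$ does, so the claim in Theorem \ref{thm:est} that $C$ is independent of $\gamma$ is not really delivered by either argument, particularly near $\gamma=2$; and in the critical case the bound $\|e^{At}\|_F\lesssim te^{-t}$ (like the paper's $e^{-t}t$) can only hold for $t$ bounded away from $0$, since at $t=0$ the left side is of order one — the correct elementary bound is $(1+t)e^{-t}$, which gives the stated asymptotics for large $t$.
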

\begin{proof}
We denote the eigen-pair of the covariance operator $\mathfrak C^\alpha$ by $\{\lambda_j,\phi_j \}_{j\in \mathbb N}$, i.e,
\[
\mathfrak C^\alpha \phi_j = \lambda_j^2 \phi_j, \quad j \in \mathbb N.
\]
Recall that $\mathfrak C^\alpha$ is the inverse of $\mathfrak L^\alpha$ and they share eigenfunctions, and  the eigenfunctions $\{ \phi_j \}$ form a basis of a separable Hilbert space, which we denote by  $\mathcal H$.  Then clearly $\mathfrak C^\alpha$ is a trace-class positive definite operator on $\mathcal H$, and $\forall x \in \mathcal H$, we have $x=\sum_{j=1}^{\infty} x_j \phi_j$. 

Due to the completeness of the  eigenfunctions $\{ \phi_j \}$, we write 
\[
\mathfrak q(t,\tau) = \sum_{j=1}^{\infty} q_{j}(t) \phi_j (\tau), \quad \mathfrak v(t,\tau) = \sum_{j=1}^{\infty} v_{j}(t) \phi_j (\tau).
\]
Then we rewrite the SPDE \eqref{eq:pmmlspdeh} in terms of the evolution of the Fourier coefficients as
follows
\begin{equation} \label{eq:coefj}
\begin{aligned}
& {\ud q_j} = v_j \ud t,\\
& {\ud v_j}  = - q_j \ud t -\gamma v_j \ud t  + \sqrt{2\gamma} \lambda_j {\ud W_j}.
\end{aligned}
\end{equation}
Here $\{W_j(t)\}$ is a sequence of i.i.d one-dimensional Brownian motions. Note that the invariant measure of above is
$$
\pi_j = \pi_{q,j}\otimes \pi_{v,j} := N(0,\sigma_{q,j}^2)\otimes N(0, \sigma_{v,j}^2)
$$
with  $\sigma_{q,j}^2 = \sigma_{v,j}^2 = \lambda_j^2$.

Observe that the system for each Fourier coefficient pair \eqref{eq:coefj}  is linear, and thus we can solve the system analytically. Let us define the matrix
$$
\mathbf A = \begin{pmatrix}
0 & 1 \\
- 1& -\gamma
\end{pmatrix}.
$$

We first consider the case when $\gamma^2 <4$.  And in this case, if we denote $\Delta = \gamma^2 -4$, then clearly $\Delta<0$, and the matrix $\mathbf A$ has two complex eigenvalues
\begin{equation}
\mu_\pm = -\frac{\gamma}{2} \pm i \frac{\sqrt{-\Delta}}{2}.
\end{equation}
Moreover, $\mathbf A$ can be diagonized as
$$
\mathbf A  = \mathbf P \Lambda \mathbf P^{-1} = \begin{pmatrix}
1 & 1 \\
\mu_+ & \mu_-
\end{pmatrix}
\begin{pmatrix}
\mu_+ &  \\
 & \mu_-
\end{pmatrix}
\begin{pmatrix}
1 & 1 \\
\mu_+ & \mu_-
\end{pmatrix}^{-1}.
$$
Setting $(q_j^\prime, v_j^\prime)^T = \mathbf P^{-1} (q_j, v_j)^T$, we have
\begin{equation} \label{eq:coefjp}
\begin{pmatrix} \ud {q_j^\prime}\\
\ud {v_j^\prime}
\end{pmatrix} = \begin{pmatrix}
\mu_+ &  \\
 & \mu_-
\end{pmatrix}
\begin{pmatrix}
q_j^\prime  \\
 v_j^\prime
\end{pmatrix} \ud t
+ \sqrt{2\gamma}\lambda_j \begin{pmatrix}
1 & 1 \\
\mu_+ & \mu_-
\end{pmatrix}^{-1}
\begin{pmatrix}
0  \\
 dW_j(t)
\end{pmatrix}.
\end{equation}
Note that, according to the assumption, the initial conditions for the coefficients $(q_j, v_j)$ are zeros, and thus the initial conditions for $(q_j^\prime, v_j^\prime)$ are also zeros (otherwise the effect of the initial condition decays exponentially with rate $\gamma/2$ anyway). And then the analytical solutions to \eqref{eq:coefjp} are given by
\begin{align*}
& q_j^\prime(t) = \frac{\sqrt{2\gamma} \lambda_j}{i \sqrt{-\Delta}}\int_0^t e^{\mu_+(t-s)}dW_j(s), \\
& v_j^\prime(t)  = - \frac{\sqrt{2\gamma} \lambda_j}{i \sqrt{-\Delta}}  \int_0^t e^{\mu_-(t-s)}dW_j(s).
\end{align*}
Then, with $(q_j, v_j)^T = \mathbf P (q_j^\prime, v_j^\prime)^T$, we obtain the analytical solution to the equation of Fourier coefficients \eqref{eq:coefj} with zero initial conditions,
\begin{equation} \label{sol:coefj}
\begin{aligned}
q_j (t) = & \frac{2 \sqrt{2\gamma}\lambda_j}{\sqrt{-\Delta}}\int_0^t e^{-\frac \gamma 2 (t-s)}\sin \left( \frac{\sqrt{-\Delta}(t-s)}{2} \right) \ud W_j (s), \\
v_j (t) = &- \frac{\gamma \sqrt{2\gamma}\lambda_j}{\sqrt{-\Delta}} \int_0^t e^{-\frac \gamma 2 (t-s)}\sin \left( \frac{\sqrt{-\Delta}(t-s)}{2} \right) \ud W_j (s) \\
 & + \sqrt{2 \gamma} \lambda_j \int_0^t e^{-\frac \gamma 2 (t-s)}\cos \left( \frac{\sqrt{-\Delta}(t-s)}{2} \right) \ud W_j (s).
\end{aligned}
\end{equation}
Clearly, $q_j(t)$ and $v_j(t)$ are mean-zero Gaussian random variable, and we denote their variances by $\Var_{q,j} (t)$ and $\Var_{v,j} (t)$, respectively.

By direct calculation, we get
\begin{equation}
\Var_{q,j} (t) =\frac{8 \gamma \lambda_j^2}{- \Delta} \int_0^t e^{- \gamma(t-s)} \sin^2  \left( \frac{\sqrt{-\Delta}(t-s)}{2} \right)  \ud s = \lambda_j^2 + \lambda_j^2 e^{-\gamma t} c_{j,1}.
\end{equation}
where
\[
c_{j,1}=\frac{4 \gamma}{-\Delta} \left( \frac{1}{- \gamma} - \frac{1}{4} \left( - \gamma \cos(\sqrt{-\Delta} t)+ \sqrt{-\Delta} \sin (\sqrt{-\Delta}t) \right) \right),
\]
and
\begin{equation}
\begin{aligned}
\Var_{v,j} (t) &=\int_0^t e^{- \gamma  (t-s)} \left( - \frac{\gamma \sqrt{2\gamma}\lambda_j}{\sqrt{-\Delta}} \sin \left( \frac{\sqrt{-\Delta}(t-s)}{2} \right) +\sqrt{2 \gamma} \lambda_j  \cos \left( \frac{\sqrt{-\Delta}(t-s)}{2} \right) \right)^2  \ud s \\
&= \lambda_j^2 + \lambda_j^2 e^{-\gamma t} c_{j,2},
\end{aligned}
\end{equation}
where 
\begin{align*}
c_{j,2} =& \frac{4}{\Delta}+ \frac{\gamma (4-2 \gamma^2)}{-4 \Delta} \left( \gamma \cos(\sqrt{-\Delta}t) + \sqrt{-\Delta} \sin (\sqrt{-\Delta}t) \right) \\& \quad + \frac {\gamma^2}{2\sqrt{-\Delta}} \left( \gamma \sin(\sqrt{-\Delta}t) + \sqrt{-\Delta} \cos (\sqrt{-\Delta}t) \right).
\end{align*}

Notice that,  when $t \rightarrow +\infty$, $\Var_{q,j} \rightarrow \lambda_j^2$ and $\Var_{v,j} \rightarrow \lambda_j^2$, and for fixed $\gamma$, there exists a constant $c_\gamma$, such that $\forall t >0$,
\[
|c_{j,1}| \le c_\gamma, \quad \text{and} \quad |c_{j,2}| \le c_\gamma.
\]
This 

Therefore, we have the following 
\begin{equation*}
\mathcal{W}_2^2\bigl(N(0, \Var_{q,j}(t)), \pi_{q,j}\bigr)  = \left(\sqrt{\Var_{q,j}(t)}  - \sigma_{q,j} \right)^2 \leq |\Var_{q,j} (t)  - \sigma_{q,j}^2 | \leq e^{-\gamma t} \lambda_j^2 c_\gamma. 
\end{equation*}
And similarly, we obtain
\[
\mathcal{W}_2^2\bigl(N(0, \Var_{v,j}(t)), \pi_{v,j}\bigr) \leq e^{-\gamma t} \lambda_j^2 c_\gamma. 
\]
The significance of the above estimates is, for each pair of Fourier coefficients, the convergence rate is independent of the mode index $j$. While by similar calculations, one can show that the Fourier coefficients of the \eqref{eq:Lang} system and the \eqref{eq:pLang} system have degenerate convergence rates as $j \rightarrow \infty$.

Thus, by summing yo all Fourier coefficients, we obtain the following convergence estimate for $\mathcal{L}(\mathfrak q(t), \mathfrak v(t))$,
\begin{equation}
\mathcal{W}_2(\mathcal{L}(\mathfrak q(t), \mathfrak v(t)), \Pi) \lesssim e^{-\frac{\gamma t}{2}} \sqrt{\sum_j \lambda_j^2} \lesssim e^{-\frac{\gamma t}{2}}.
\end{equation}

We can carry out similar calculations for $\gamma > 2$, when  the matrix $A$ has two distinct real negative eigenvalues, and the convergence estimate becomes
\begin{equation}
\mathcal{W}_2(\mathcal{L}(\mathfrak q(t), \mathfrak v(t)), \Pi) \lesssim e^{-\frac{(\gamma-\sqrt{\gamma^2-4}) t}{2}}.
\end{equation}

Finally, when  $\gamma = 2$, the matrix $A$ has two identical negative eigenvalues, and the the convergence estimate becomes
\begin{equation}
\mathcal{W}_2(\mathcal{L}(\mathfrak q(t), \mathfrak v(t)), \Pi) \lesssim e^{-t} t.
\end{equation}
This completes the proof.
\end{proof}

Theorem \ref{thm:est} establishes  the exponential convergence of the distribution of the SPDE \eqref{eq:pmmlspdeh} to its equilibrium measure, which implies that the finite dimensional system  \eqref{eq:pmmLang} converges to the invariant measure with a uniform rate. In contrast, for both  the \eqref{eq:Lang} system or the \eqref{eq:pLang} system, one can show by similar calculations that the convergence rate becomes degenerate for high modes.

It is also worth to comment on the dependence of the convergence rate on the damping parameter $\gamma$. When $\gamma \in (0,2)$, the convergence rate is increased as $\gamma$ increases. When $\gamma>2$, the convergence rate decreases as $\gamma$ increases. Observe also that when $\gamma \gg 1$, the exponent $\theta(t)$ is approximately $t/\gamma$, which is consistent with the fact that the Langevin dynamics converges to the overdamped limit under the time scaling $t \mapsto \gamma t$ as $\gamma \rightarrow \infty$.

Finally, we remark that the exponential convergence result in Theorem \ref{thm:est} is expected to be  generalized to a wider class of potential functions beyond the harmonic case.  The recent works by Zimmer \cite{Z17} and Bou-Rabee and Eberle \cite{EGZ19} proved geometric ergodicity of a family of  infinite-dimensional diffusions and the infinite dimensional preconditioned HMC based on a two-scale coupling approach. Similar ideas may be adopted to obtain a dimension-independent convergence rate for the preconditioned mass-modified  Langevin dynamics \eqref{eq:pmmlspd} with a general potential. We will investigate this problem in a future work.   



\section{Preconditioned dynamics for multi-level systems} \label{sec:multi-level}


 In the diabatic representation, the Hamiltonian
operator of a general two-level system can be written as (atomic unit
is used)
\[ \wh H = \wh T+ \wh V = \frac{1}{2}
\begin{pmatrix} \wh p^2 & \\ & \wh p^2
\end{pmatrix} +
\begin{pmatrix} V_{00}(\wh q) & V_{01}(\wh q) \\ V_{10}(\wh q) &
V_{11} (\wh q)
\end{pmatrix},
\] where $\wh q$ and $\wh p$ are the nuclear position and momentum
operators, and the mass of the nuclei is chosen to be $1$ (again, for
simplicity we assume all nuclei have the same mass).  The multi-level
quantum systems arise when the non-adiabatic effect between different
energy surfaces of electronic states cannot be neglected, see e.g.,
the review articles \cite{Makri:99, StockThoss:05, Kapral:06}.

The thermal equilibrium average of
observables, given by
\begin{equation}\label{eq:aveA} \langle\wh{A}\rangle = \frac{\tr_{ne}
[e^{-\beta \wh H} \wh A ] }{\tr_{ne}[e^{-\beta \wh H}]},
\end{equation} for an operator $\wh A$, where $\beta = \frac{1}{k_B
T}$ with $k_B$ the Boltzmann constant and $T$ the absolute
temperature, and $\tr_{ne}$ denotes trace with respect to both the
nuclear and electronic degrees of freedom, namely,
\[ \tr_{ne}=\tr_{n}\tr_{e}=\tr_{L^2 (\R^d)}\tr_{\mathbb C^{2}}.
\] The denominator in \eqref{eq:aveA} is the partition function given
by $\mathcal{Z} = \tr_{ne}[e^{-\beta \wh H}]$.

In \cite{LZPIMD1,LZPIMD2}, it is shown that the thermal average can be viewed as (up to a
normalization) {an average with respect to the classical Gibbs
distribution} for {ring polymers on the extended configuration space}:
\begin{equation}\label{eq:ensembleavgA} \langle \wh{A} \rangle \approx \int_{\mathbb R^{2dN}} \ud \bd q \ud \bd p
\sum_{\bd{\ell}\in\{0, 1\}^N} \pi_N (\bd q,  \bd p, \bd \ell )W_N[A]
(\bd q,\bd \ell),
\end{equation} with the distribution
\begin{equation}\label{eq:pi} \pi_N (\bd q,  \bd p, \bd \ell ) =
\frac{1}{\mathcal Z_N'} \exp(-\beta_N H_N ({\bd q, \bd p,\bd \ell})).
\end{equation} 
Here, the extended phase space variable  $ (\bd{q}, \bd{p},
\bd{\ell}) \in \mathbb R^{2dN} \times \{0, 1\}^N$, where  $\bd{q} = (q_1, \cdots, q_N)$  are the position of each bead,  $\bd{p} = (p_1, \cdots, p_N)$  are the momentum of each bead and  $\bd{\ell} = (\ell_1, \cdots, \ell_N)$
indicates the energy level of the bead.  The expressions for $W_N[A]$ and $H_N$ can be found in
Appendix~\ref{app:two-level}, and the readers may also refer to
\cite{LZPIMD1} for detailed derivations.
Similar to the single-level case, the Hamiltonian can be rewritten as
\begin{equation}\label{eq:Ham2l}
  \begin{aligned}
    H_N (\bd q, \bd p, \bd \ell)  &=   \frac 1 2 \bd p \cdot M^{-1} \bd p + \frac 1
    2 \bd q \cdot L^{} \bd q + V_N (\bd q, \bd \ell) \\
    & = \frac 1 2 \bd p \cdot M^{-1} \bd p + \frac 1
    2 \bd q \cdot L^{\alpha} \bd q + U_N^\alpha (\bd q, \bd \ell).
\end{aligned}
\end{equation}
Note that, the dependence on $\bd{\ell}$ is all contained in the $V_N$
(or equivalently, $U_N^\alpha$ ) part of the Hamiltonian.  Ergodic
trajectories with respect to the distribution \eqref{eq:pi} have been
introduced in \cite{LZPIMD1} to effectively sample in thermal averages
in such two-level systems.  In \cite{LZPIMD2}, the infinite swapping
limit of the sampling path has been discussed to enhance the numerical
performance of the algorithm when the observables have off-diagonal
elements. It amounts to integrate out the fast degree of freedom
$\bd \ell$, which leads to the following averaged Hamiltonian
\begin{align}
  \wb{H}_N(\bd{q},\bd{p}) &= -\frac{1}{\beta_N} \ln \Bigl(\sum_{\bd{\ell} \in \{0, 1\}^N}  e^{-\beta_N H_N(\bd{q}, \bd{p}, \bd{\ell})} \Bigr)  \\
&=  \frac 1 2 \bd p \cdot M^{-1} \bd p + \frac 1
    2 \bd q \cdot L^{\alpha} \bd q  + \wb{U_N^{\alpha}}(\bd q). \nonumber
\end{align}
where $\wb{U_N^{\alpha}}(\bd q)$ denotes the averaged potential
\[
\wb{U_N^{\alpha}}(\bd q)=-\frac{1}{\beta_N} \ln \Bigl(\sum_{\bd{\ell} \in \{0, 1\}^N}  e^{-\beta_N U_N^{\alpha}(\bd q. \bd \ell)} \Bigr) 
\]
and correspondingly, the averaged distribution is given by
\begin{equation}
  \wb{\pi}(\bd{q}, \bd{p})  \propto \exp(-\beta_N \wb{H}_N(\bd{q}, \bd{p})).
\end{equation}
We define the conditional probability
distribution in $\bd \ell$ with fixed $(\bd{q}, \bd{p})$, namely, 
\begin{equation}\label{eq:reZz}
\pi (\bd \ell \mid \bd{q}, \bd{p})= \frac{1}{\exp(-\beta_N \wb{H}_N(\bd{q}, \bd{p}) )} \exp(-\beta_N H_N (\bd{q}, \bd{p}, \bd \ell )).
\end{equation}
and the averaged observable 
\begin{equation} \label{eq:aWA}
\widetilde W [A](\bd{q}, \bd{p}) := \sum_{\bd{\ell} \in \{0, 1\}^N} \pi(\bd\ell \mid \bd{q}, \bd{p}) W_N[A](\bd{q}, \bd{p}, \bd \ell), 
\end{equation}
then  the thermal average is approximated by
\begin{equation} \label{eq:aveIS}
\langle \wh{A} \rangle  \approx  \int_{\RR^{2dN}} \ud \bd z \,   \wb{\pi}(\bd{z})  \widetilde W [A](\bd z).
\end{equation}

 If we can construct a trajectory $
{\bd{z}}(t)$ that is ergodic with respect to the equilibrium
distribution $\wb \pi$, we can sample the ensemble average on the right
hand side of \eqref{eq:aveIS} by a time average to approximate
$\langle\wh{A}\rangle$:
\begin{equation} \langle \wh{A} \rangle \approx \lim_{T \rightarrow
\infty } \frac{1}{T} \int_0^T \widetilde W_N [A] ( {\bd{z}} (t)) \ud t.
\end{equation} 

In \cite{LZPIMD2}, the infinite swapping limit of the PIMD-SH method was introduced to sample the thermal average \eqref{eq:aveIS}, where $ {\bd{z}}(t)$ evolves according to the following Langevin dynamics 
\begin{align} \label{eq:qa}
  {\ud \bd q}&=  M^{-1}{\bd p} \ud t. \\
  \label{eq:pa}
  {\ud \bd p}&= - \nabla_{\bd q} \wb{H}_N(\bd q, \bd p) \ud t - \gamma \bd p \ud t + \sqrt{2 \gamma \beta_N^{-1}M} \ud \bd B \\
& =  -L^\alpha  \bd q \ud t -\nabla_{\bd q}  \wb{U_N^{\alpha} } (\bd q)\ud t  \nonumber \\
& \quad \quad \quad - \gamma \bd p \ud t  + \sqrt{2 \gamma \beta_N^{-1}M} \ud \bd B. \nonumber
\end{align}
We observe that, compared with the single level case, the sampling dynamics only differs in the $-\nabla_{\bd q}  \wb{U_N^{\alpha} } (\bd q)$ term, and the preconditioning techniques introduced in Section~\ref{sec:underdamp} can be directly applied.
In the following, we only consider the preconditioning for the mass-modified Langevein system.

Similar to the single level case,  $M$ is a
positive definite fictitious mass matrix for the auxiliary momentum
variables $\bd{p}$.  We choose $M=L^\alpha$ and introduce the velocity variable $\bd v =M^{-1} \bd p$, then, we obtain the mass-modified Langevin dynamics
\begin{equation} \label{eq:pmmLang2}
\begin{aligned} 
  {\ud \bd q}&=  { \bd v} \ud t;\\
  {\ud  \bd v}&= -  \bd q \ud t - (L^\alpha)^{-1} \nabla_{\bd q} \wb{ U_N^{\alpha}} (\bd q) \ud t  - \gamma  \bd v\ud t  \\ & \quad \quad  \quad \quad + \sqrt{\frac{2 \gamma  (L^\alpha)^{-1}}{\beta_N}} \ud \bd B.
\end{aligned}
\end{equation}
And with this change of variables, the Hamiltonian \eqref{eq:Ham2l} is rewritten as
\begin{equation}\label{eq:Ham2l2}
  \begin{aligned}
    \wb{H'_N} (\bd q, \bd v)  &=   \frac 1 2 \bd v \cdot L^{\alpha} \bd v + \frac 1
    2 \bd q \cdot L^{\alpha} \bd q + \wb{U_N^\alpha} (\bd q).
\end{aligned}
\end{equation}

To conclude this section, we remark that the preconditioning
techniques, in theory,  are compatible with other numerical approaches in the
non-adiabatic regime, 
e.g. \cite{LL18,TSM18}. We shall omit the
derivations in this work, and only we will carry out numeric experiments
for the preconditioned mass-modified IS method (pmmIS). 


\section{Numerical Tests} \label{sec:num}

\subsection{BAOAB integrator for Langevin dynamics}
For numerical integration of the sampling dynamics, we will use
variants of the BAOAB scheme \cite{LeimkuhlerMatthews} which is widely
used for the original Langevin dynamics \eqref{eq:Lang}. We discuss
the adaptations of the BAOAB scheme to the proposed systems. 

Let us consider a general Langevin dynamics as
\begin{align} \label{eq:qagen}
  {\ud \bd q}&=  {C_1 M^{-1} \bd p} \ud t;\\
  {\ud  \bd p}&= -C_1 L^\alpha \bd q \ud t -C_1\nabla_{\bd q} U^\alpha_N \ud t   \label{eq:pagen} \\
& \quad \quad - \gamma  C_2 \bd p\ud t + \sqrt{\frac{2 \gamma   C_2 M}{\beta_N}} \ud \bd B,  \nonumber
\end{align}
where $C_1$ and $C_2$ are some positive definite matrices, and we
assume $C_2$ and $M$ commute: $C_2M=MC_2$.  Obviously, when
$C_1=C_2=I$ and $M= m I$, the system \eqref{eq:qagen}-\eqref{eq:pagen}
reduces to \eqref{eq:Lang}. When $C_1=C_2=(L^\alpha)^{-1}$ and
$M= m I$, the system \eqref{eq:qagen}-\eqref{eq:pagen} becomes
\eqref{eq:pLang}. It reduces to \eqref{eq:mmLang} when $C_1=C_2=I$ and
$M= L^\alpha$. The original BAOAB scheme easily extends to the system
\eqref{eq:pmmLang} in $\bd q$ and $\bd v$ variables, which we will
skip the details.

BAOAB scheme is based on operator splitting, where the whole dynamics
is divided into the kinetic part (denoted by ``A'')
\begin{align} 
  {\ud \bd q}&=  C_1 M^{-1}{ \bd p} \ud t;\\
  {\ud  \bd p}&= \bd 0,
\end{align}
the potential part (denoted by ``B'')
\begin{align} 
  {\ud \bd q}&=\bd 0;\\
  {\ud  \bd p}&=  C_1 L^\alpha \bd q \ud t - C_1 \nabla U_N^{\alpha}( \bd q)-  \ud t
\end{align}
and the thermostat part (denoted by ``O'')
\begin{align}
  {\ud \bd q}&= \bd  0;\\
  {\ud  \bd p}&= -  \gamma C_2  \bd p \ud t + \sqrt{2 C_2 \gamma \beta_N^{-1} M} \ud  \bd B.
\end{align}
Note that, the potential step and the kinetic step can be solved analytically and the thermostat part allows the following exact solution in the sense of distributions
\begin{align}
  {\bd q}(t) & =  \bd q(0);\\
  {  \bd p} (t)& =e^{ -  \gamma C_2 t} \bd p(0)  + \sqrt{(1-e^{2 \gamma  C_2 t})(\beta_N^{-1} M)} \,\, \bd \xi,
\end{align}
where $\bd \xi$ is $dN$ dimensional, with each component an
independent standard Gaussian random variable.  We also remark that if
time step size $\Delta t$ is fixed, the matrix $e^{ - \gamma C_2 t}$
and $\sqrt{1-e^{2 \gamma C_2 t}}$ can be precomputed. Thus, in each
time evolution step, the computation complexity is dominated by the
matrix-vector multiplication.

\subsection{Numerical examples}

To compare the performance of the diagonal-mass Langevin dynamics
\eqref{eq:Lang}, the preconditioned diagonal-mass Langevin dynamics
\eqref{eq:pLang}, the mass-modified Langevin dynamics
\eqref{eq:mmLang} and the preconditioned mass-modified Langevin
dynamics \eqref{eq:pmmLang}, we carry out numerical tests using the
following two examples, in one and two dimension, with computational
domain with periodic boundary conditions. The reference solutions are
obtained with pseudo-spectral discretization (which is possible thanks
to the low-dimensionality). In these examples, a large number of beads
are needed to reduce the asymptotic error, and thus manifest
differences of the various Langevin dynamics introduced above.

In the 1D test problem, the potential function is given by
\begin{equation}\label{ex:pot1}
V(q)=10-10\cos\bigl(q\bigr)+5\cos\bigl(2(q-0.1)\bigr).
\end{equation}
This potential surface is plotted in Figure \ref{fig:Eplot1}, from
which, we observe that the potential function has two local minima
around $x=\pm 1$ respectively, and the barrier separating the two
minima is located around $x=0$. We choose inverse temperature
$\beta=8$ and observable is given by
\begin{equation}\label{ex:ob1}
A(q)=e^{-10q^2}.
\end{equation}
Since the observable is localized around the local maximum of the potential and the prescribed temperature is fairly low compared to the potential barrier, many beads are needed to represent the ring polymer configuration that extending from the local minima to the saddle point. 
\begin{figure}[ht]
\begin{centering}
\includegraphics[scale=0.55]{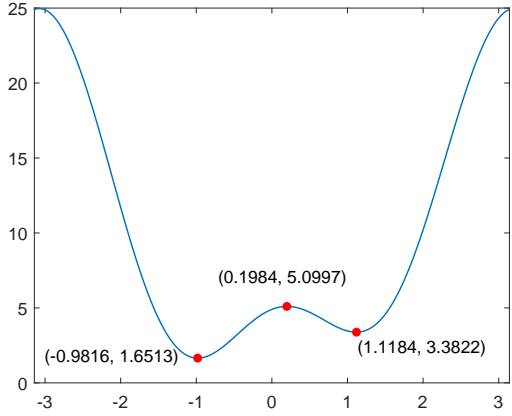} \\
\caption{Potential surface of the 1D test example \eqref{ex:pot1}. The
  red dots mark local maximum and local minima.}
\label{fig:Eplot1}
\end{centering}
\end{figure}

In the 2D example, the potential function is given by a three-well model
in 2D (see Fig.~\ref{fig:Eplot2}):
\begin{equation}\label{ex:pot2}
  \begin{aligned}
V(q_1,q_2)=& \,\,12-3\bigl(1+\cos(q_1) \bigr) \bigl(1+\cos(q_2) \bigr)
\\
& +3 e^{-5 q_1^2-5(q_2-0.2)^2}- 3 e^{-5 q_1^2-5(q_2-0.6)^2}\\
& -5 e^{-5 (q_1-0.6)^2-5 q_2^2}- 5 e^{-5 (q_1+0.6)^2-5 q_2^2}.
\end{aligned}
\end{equation}
Around the origin, there are two deeper wells located around
$(\pm 0.6,0)$ and a shallower well located around $(0,0.6)$.  The
inverse temperature is given by $\beta=8$ and the test observable is a
Gaussian located around the shallower well
\begin{equation}\label{ex:ob2}
A(q_1,q_2)=e^{-10\left(q_1^2+(q_2-0.6)^2\right)}.
\end{equation}
Thus a large number of beads are needed to represent the ring polymer configuration. 

\begin{figure}[ht]
\begin{centering}
\includegraphics[scale=0.55]{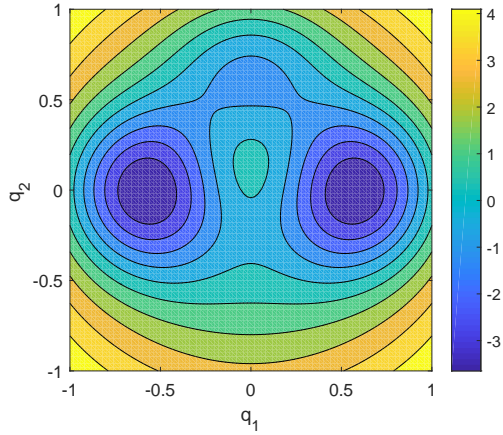} \\
\caption{Potential surface of the 2D test example
  \eqref{ex:pot2}. 
At $(\pm 0.6,0)$, the
  potential takes the value $-3.6317$, and at $(0,0.6)$, the potential
  takes the value $ -0.8773$.}
\label{fig:Eplot2}
\end{centering}
\end{figure}

{
For the two-level example,  the test potential is 
\begin{equation}  \label{ex:pot1-2}
\left \{
\begin{split}
V_{00} &=a \bigl(1-\cos(x) \bigr); \\
V_{11} & = b\bigl(1-\cos(x) \bigr);\\
V_{01} & = V_{10} =  c e^{-d x^2}.
\end{split}
\right.
\end{equation}
We take $b>a$, so $V_{11} \ge V_{00}$ and the two energy surfaces only
intersect at $x=0$, where the off-diagonal term takes its largest
value.  
The energy surfaces are symmetric with respect to $x = 0$. At thermal
equilibrium, the density is expected to concentrate around $x=0$,
where transition between the two surfaces is the most noticeable due
to the larger off-diagonal coupling terms. In this work, we
choose $a=4$, $b=8$, $c=1$ and $d=1$. We plot the diabatic energy surfaces in Figure~\ref{fig:Eplot2}. In this example, 
we test and compare the performances of
numerical methods with the diagonal observable
\begin{equation} \label{eq:ob1}
\wh A=\begin{bmatrix}
  e^{- {\wh q}^2}   &  0\\
 0 &   e^{- {\wh q}^2}  
\end{bmatrix},
\end{equation}
and also the off-diagonal observable
\begin{equation} \label{eq:ob2}
\wh{A}=\begin{bmatrix}
0   &    e^{- {\wh q}^2}\\
  e^{- {\wh q}^2} &   0  
\end{bmatrix}.
\end{equation}
}

\begin{figure}[htbp]
\begin{center}
\includegraphics[scale=0.55]{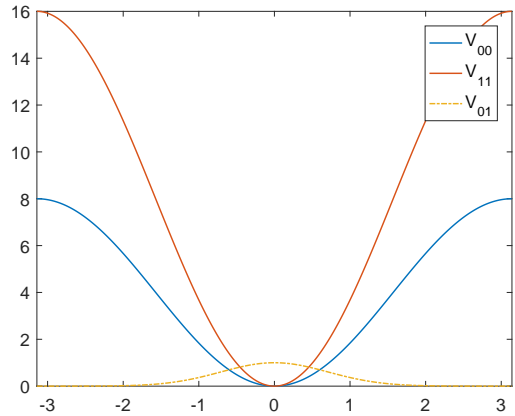} \\
\caption{Diabatic potential surfaces for the test example
  \eqref{ex:pot1-2}. }
\label{fig:Eplot2}
\end{center}
\end{figure}

\subsection{Stability and convergence tests with different time step sizes} \label{sec:stability}

In this section, we focus on the 1D test example, with the potential
function and the observable given by \eqref{ex:pot1} and
\eqref{ex:ob1}, respectively, and we choose the inverse temperature
$\beta=8$. In the \eqref{eq:Lang} system and \eqref{eq:pLang} system,
we choose the scalar mass $m=1$. We first aim to test the four
Langevin dynamics \eqref{eq:Lang}, \eqref{eq:pLang}, \eqref{eq:mmLang}
and \eqref{eq:pmmLang} using the BAOAB scheme with various time steps
and numbers of beads. In particular the time step size restrictions
when the number of beads is large. We further study the convergence
behavior of the PIMD simulations with respect to the time step size.

We test the BAOAB method with $\Delta t=1$, $\frac 1 4$,
$\frac 1 {16}$, $\frac 1 {64}$ and $\frac 1 {256}$. The errors in the
empirical averages till simulation time $T=10,000$ from each
simulation are reported in Table~\ref{table:test1_1} and
Table~\ref{table:test1_3}.  We observe that, the two preconditioned
Langevin dynamics \eqref{eq:pLang} and \eqref{eq:pmmLang} show
superior numerical stability compared to the other two systems, as
expected. The numerical results are stable even for $\Delta t=1$, and
they provide accurate approximation of the observable when
$\Delta t<1$.
\begin{table}[pth]
  \centering
  \begin{tabular}{ c| c| c|c|c|c} \toprule
    \# beads  & $\Delta t=1$ & $\Delta t=\frac{1}{4}$ & $\Delta t=\frac{1}{16}$ & $\Delta t=\frac{1}{64}$ & $\Delta t=\frac{1}{256}$  \\ \hline
    32& NaN & 9.74e-2 & 1.15e-2 & 1.05e-2 & 9.67e-3
    \\ \hline
    64 & NaN & NaN & 2.30e-3 & 3.11e-3 & 1.31e-3 
    \\ \hline
    128 & NaN & NaN & 9.79e-2 & 3.85e-4 & 4.03e-4  \\ \bottomrule
  \end{tabular}
  \medskip
  
  \begin{tabular}{ c| c| c|c|c|c} \toprule
    \# beads  & $\Delta t={1}$ & $\Delta t=\frac{1}{4}$ & $\Delta t=\frac{1}{16}$ & $\Delta t=\frac{1}{64}$ & $\Delta t=\frac{1}{256}$  \\ \hline
    32& 9.86e-2 & 1.07e-2 & 1.22e-2 & 1.10e-2 & 7.01e-3
    \\ \hline
    64 & 9.80e-2 & 4.10e-3 & 3.01e-3 & 3.75e-3 & 4.69e-3 
    \\ \hline
    128 & 9.87e-2 & 2.88e-3 & 2.73e-3 & 4.70e-4 & 2.15e-4  \\ \bottomrule
\end{tabular}
\caption{1D Example. Numerical empirical averages computed with various time step sizes and various numbers of beads. The reference value is 9.8734e-2.  Top: the \eqref{eq:Lang} system. Bottom: the \eqref{eq:pLang} system. "NaN" means the numerical integrator is unstable.
}
  \label{table:test1_1}
\end{table}

\begin{table}[pthb]
  \centering
  \begin{tabular}{ c| c| c|c|c|c} \toprule
   \# beads  & $\Delta t={1}$ & $\Delta t=\frac{1}{4}$ & $\Delta t=\frac{1}{16}$ & $\Delta t=\frac{1}{64}$ & $\Delta t=\frac{1}{256}$  \\ \hline
 32& NaN & NaN & NaN & 1.16e-2 & 9.62e-3
    \\ \hline
   64 & NaN & NaN & NaN & NaN & 2.91e-3 
    \\ \hline
128 & NaN & NaN & NaN & NaN & NaN  \\ \bottomrule
\end{tabular}
\medskip

  \begin{tabular}{ c| c| c|c|c|c} \toprule
   \# beads  & $\Delta t={1}$ & $\Delta t=\frac{1}{4}$ & $\Delta t=\frac{1}{16}$ & $\Delta t=\frac{1}{64}$ & $\Delta t=\frac{1}{256}$  \\ \hline
 32& 9.75e-2 & 1.03e-2 & 9.94e-3 & 1.04e-2 & 9.09e-3
    \\ \hline
   64 & 9.84e-2 & 2.83e-3 & 3.00e-3 & 3.21e-3 & 1.05e-3 
    \\ \hline
128 & 9.82e-2 & 1.03e-3 & 2.48e-4 & 1.33e-3 & 4.23e-4  \\ \bottomrule
\end{tabular}
\caption{1D Example. Numerical empirical averages computed with various time step sizes and various numbers of beads. The reference value is 9.8734e-2. Top:  the \eqref{eq:mmLang} system, Bottom: the \eqref{eq:pmmLang} system. "NaN" means the numerical integrator is unstable.}
  \label{table:test1_3}
\end{table}

In comparison, the numerical solution for \eqref{eq:Lang} blow up for
large numbers of beads or large time step sizes, caused by
instability. The stability constraint is even more severe for
\eqref{eq:mmLang} We further observe that in the \eqref{eq:Lang}
system and the \eqref{eq:mmLang} system, as the number of beads
increases, one needs to take smaller time steps in integration the
sampling trajectories for the sake of stability.

Finally, we observe from the tables that, when the number of bead
equals $128$, the numerical results are closer to the reference
values. It confirms the intention of designing such an example, due to
the large potential barrier and the low temperature, many beads are
needed to reduce the asymptotic error in the ring polymer
approximation.


\medskip 

Next, we test the two preconditioned dynamics \eqref{eq:pLang} and
\eqref{eq:pmmLang} for convergence with respect to the time step
sizes. To reduce the effect of the asymptotic error in the ring
polymer approximation, we take the number of beads $N=128$. We test
the two systems with $\Delta t= \frac 1 2$, $\frac 1 4$, $\frac 1 8$
and $\frac 1 {16}$ with simulation time $T=40,000$. We plot the
running averages of the PIMD simulation for \eqref{eq:pLang} and
\eqref{eq:pmmLang} in Figure \ref{fig:time1}. We observe the the plots
that, while some estimation bias is present for $\Delta t= \frac 1 2$,
it is becomes not unnoticeable when $\Delta t$ takes smaller values.

\begin{figure}[ht]
\begin{centering}
\includegraphics[scale=0.5]{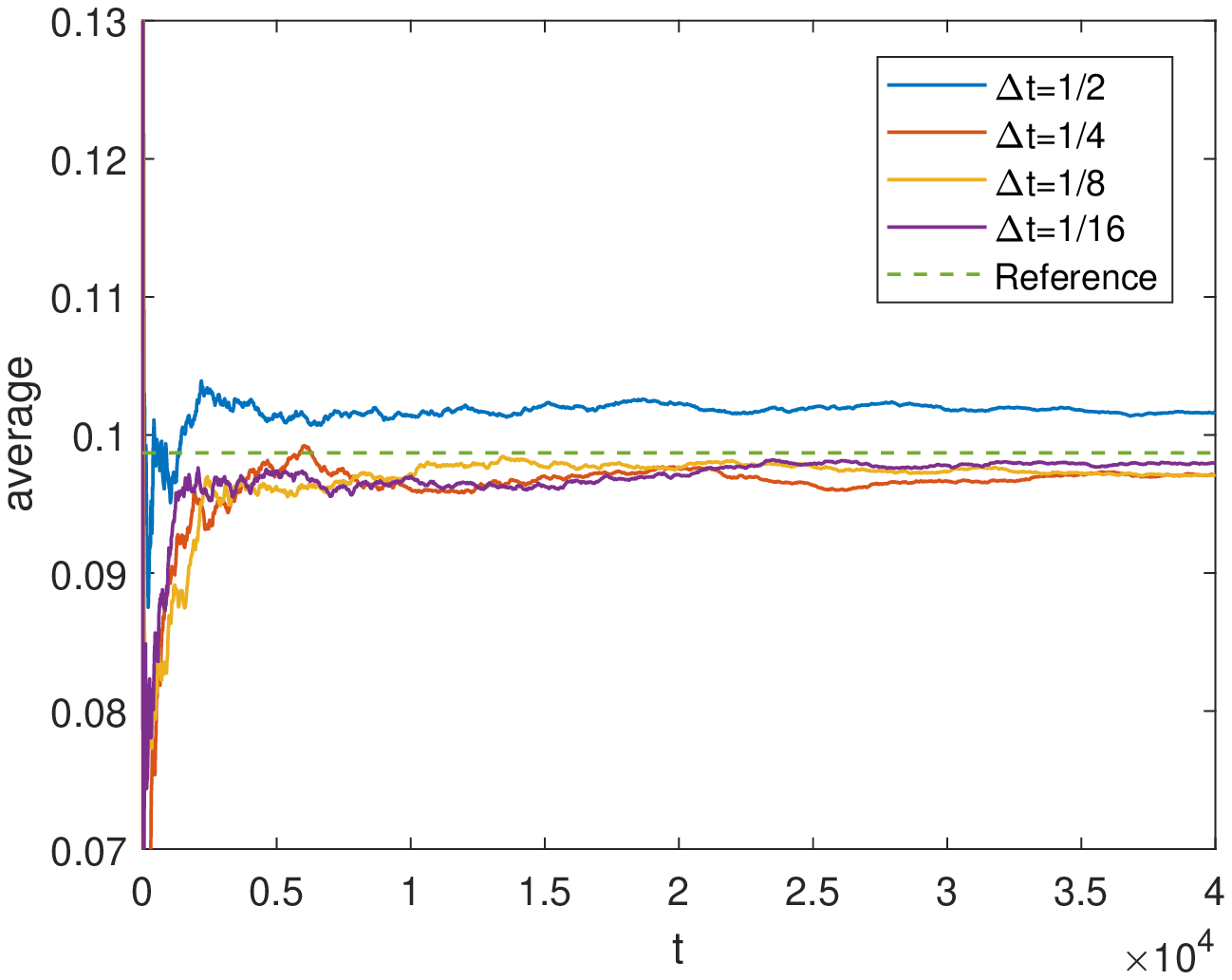} 
\includegraphics[scale=0.5]{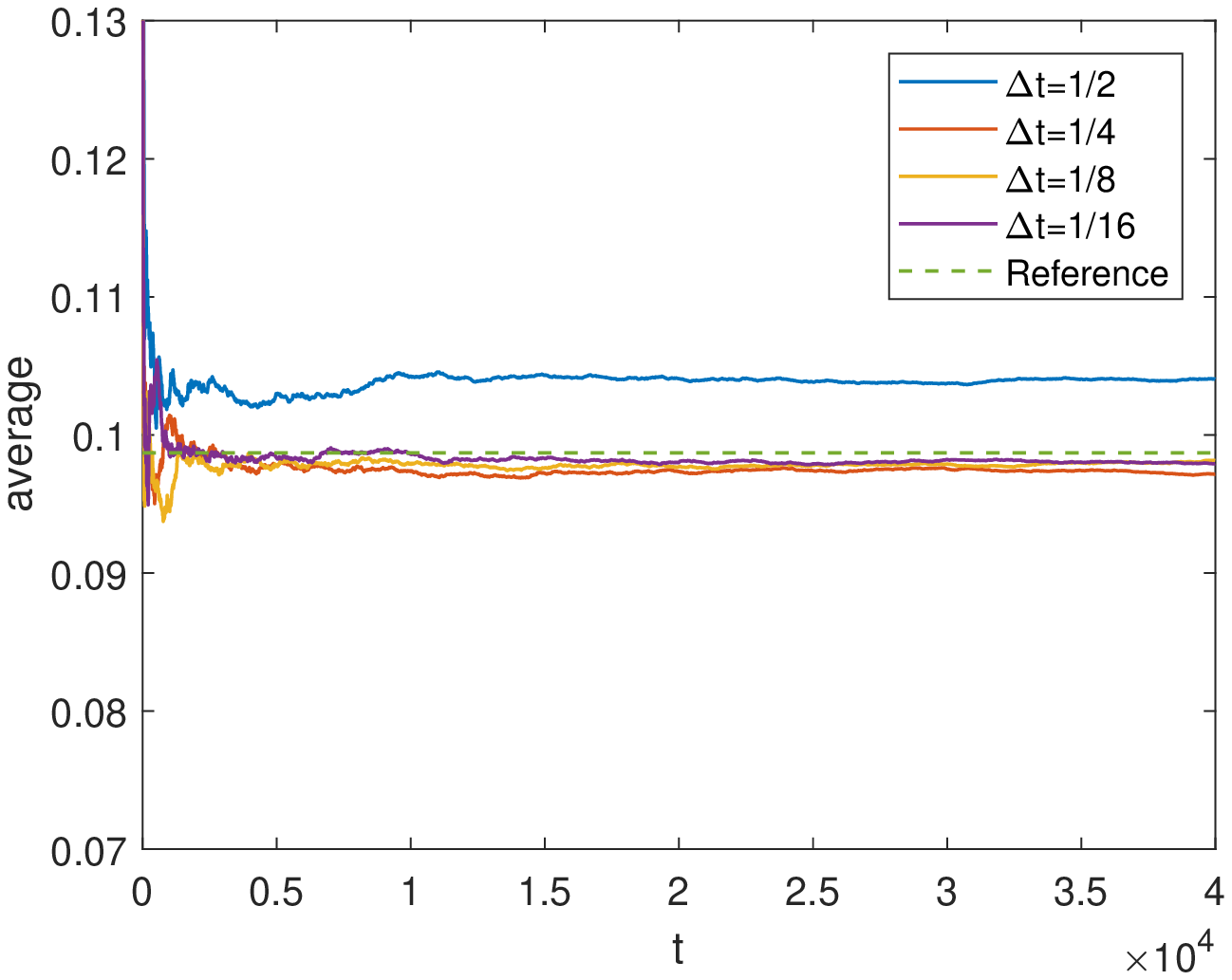} \\
\caption{1D Example. Running averages of the PIMD method with various
  time steps. The reference value is $9.8734e$-$2$.  Left: the
  \eqref{eq:pLang} system, Right: the \eqref{eq:pmmLang} system.}
\label{fig:time1}
\end{centering}
\end{figure}


We can further confirm our observation by looking at the mean squared
error as a function of sampling time for those two systems as in
Figure~\ref{fig:mse1}. When $\Delta t = \frac 1 2$, we observe in
either case the sampling error is saturated around $t=1,000$, and when
$\Delta t = \frac 1 4$ or $\frac 1 8$, the sampling error is saturated
around $t=10,000$. This implies that when the simulation time is long
enough, the bias introduced by numerically integrating the sampling
trajectories with large time steps dominates the mean squared error.

The numerical results suggest we can take $o(1)$ time steps for
accurate approximation of the ensemble average, even for very large
number of beads. From this perspective, the \eqref{eq:pLang} system
and the \eqref{eq:pmmLang} system are better platforms for PIMD
simulations, because we only need to take small time steps for
accuracy, but not for stability constraints. 

\begin{figure}[ht]
\centering
\includegraphics[scale=0.5]{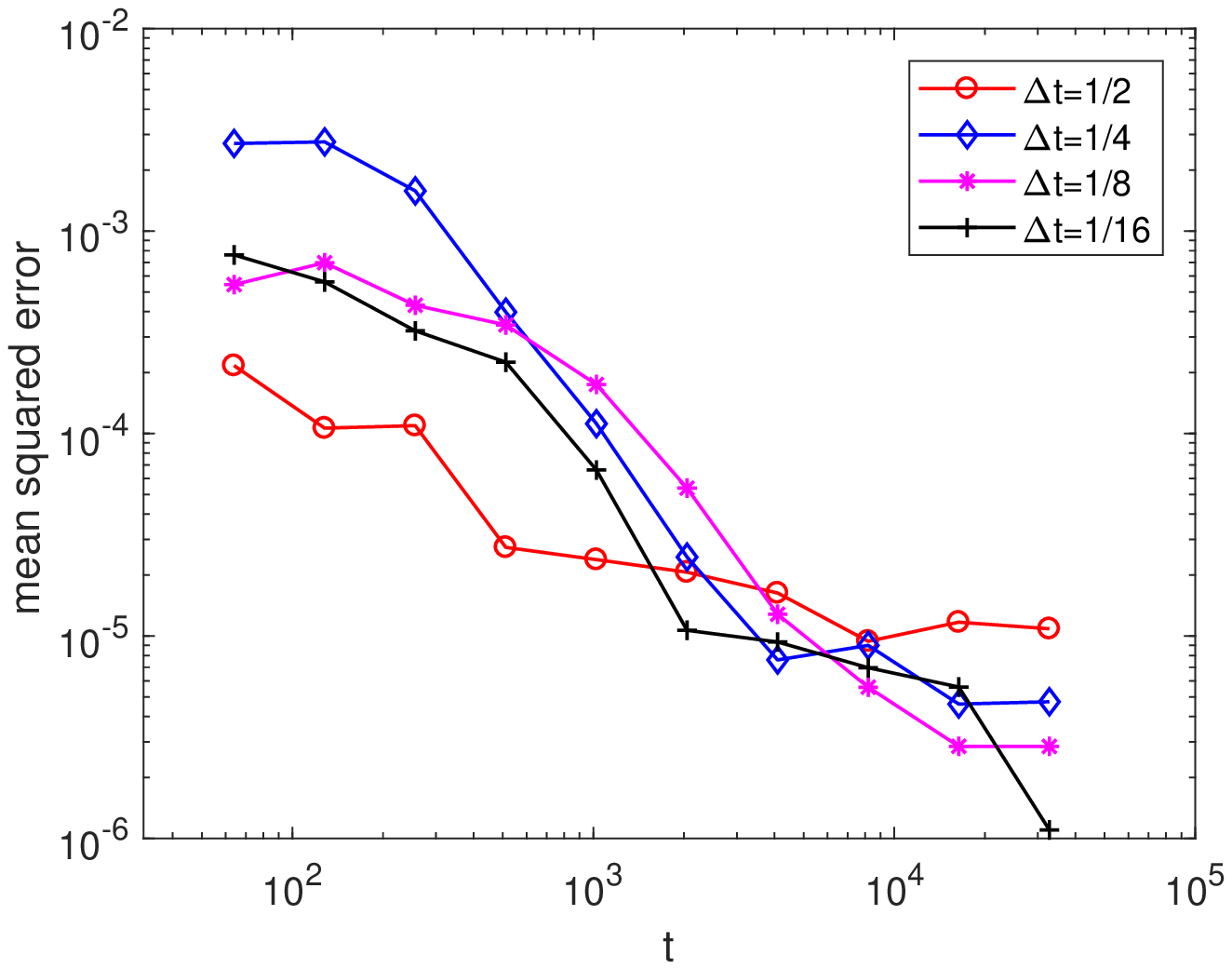} 
\includegraphics[scale=0.5]{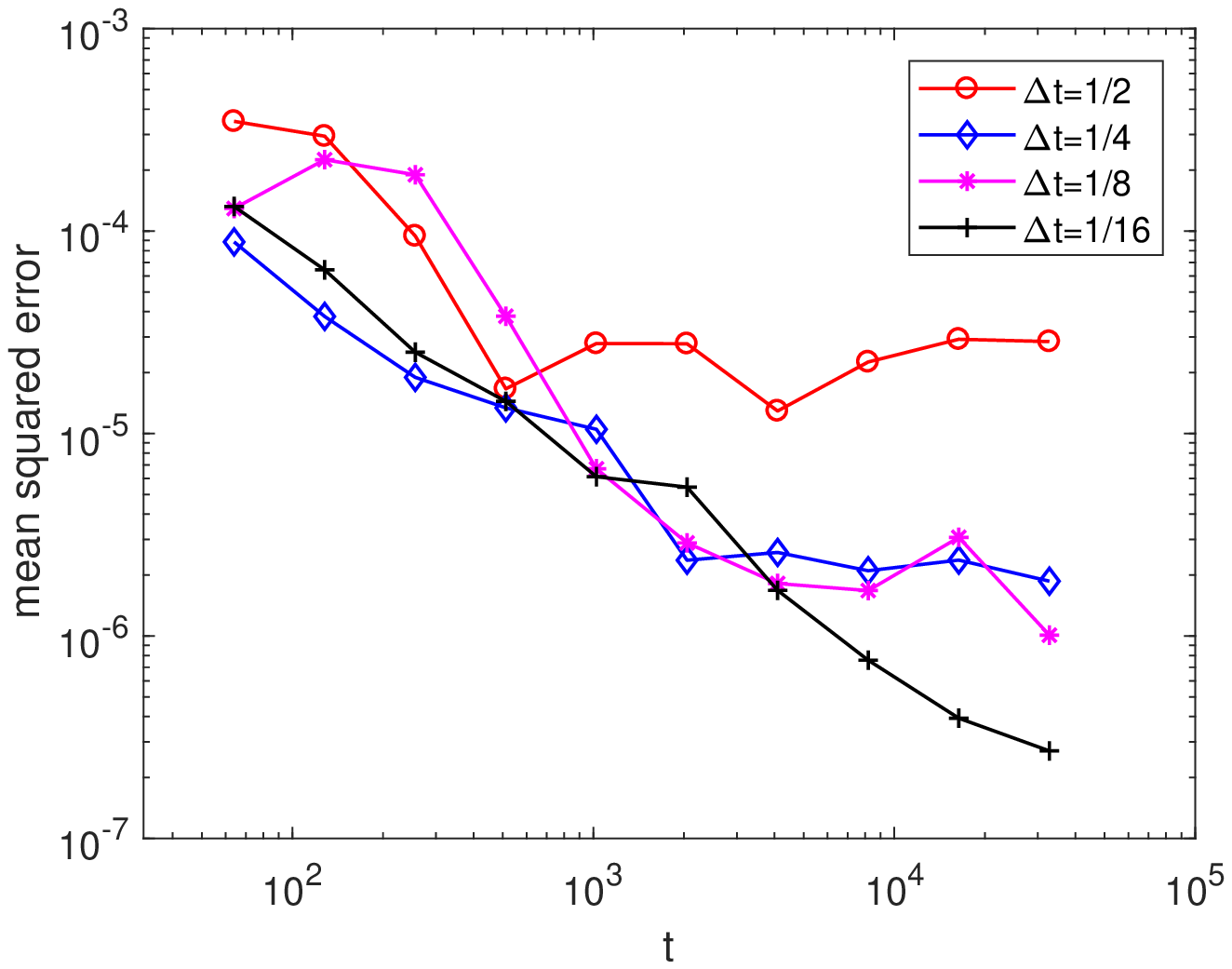} \\
\caption{1D Example. Mean squared errors at $t=2^6,\, 2^7, \cdots, 2^{15}$ by the BAOAB method with various time step sizes. Left:  the \eqref{eq:pLang} system. Right: the \eqref{eq:pmmLang} system.}
\label{fig:mse1}
\end{figure}

%

\subsection{Convergence with respect to the number of beads}

In this part, we aim to compare the PIMD simulation for the
\eqref{eq:pLang} system and the \eqref{eq:pmmLang} system from another
perspective.  We take $\Delta t= \frac 1 {50}$ to make sure the bias
due to numerical integration of the sampling trajectory is negligible
when the simulation time $T=10,000$.  The mean squared errors for
simulating those two systems are plotted in Figure \ref{fig:mse3}. We
observe in either test noticeable asymptotic bias when the number of
beads is $32$ or $64$, while the mean squared error decays in inverse
proportion to the simulation time when the number of beads equals
$128$.

\begin{figure}[ht]
\begin{centering}
\includegraphics[scale=0.5]{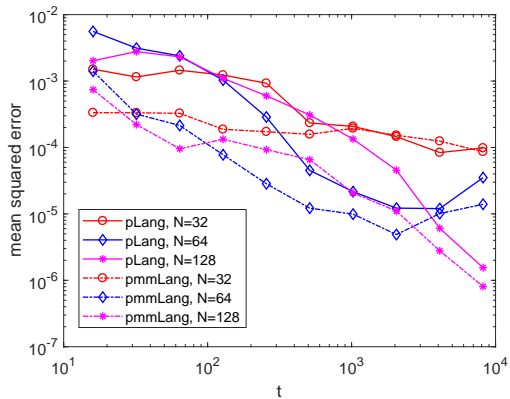} \\
\caption{1D Example. Mean squared errors at $t=2^4,\, 2^5, \cdots, 2^{13}$ by the BAOAB method with various number of beads.  Solid line: the \eqref{eq:pLang} system. Dash-dot line: the \eqref{eq:pmmLang} system.}
\label{fig:mse3}
\end{centering}
\end{figure}

%

Moreover, we can observe some differences between the two systems
\eqref{eq:pLang} and \eqref{eq:pmmLang}. By comparing Figure
\ref{fig:mse1}, and Figure \ref{fig:mse3}, we find that when the
numerical error is dominated by the sample variance (small $\Delta t$
and large number of beads), sampling based on \eqref{eq:pmmLang} has
better accuracy. To understand better this observation, we plot in
Figure~\ref{fig:qauto} the autocorrelation of the $\bd q$
variable 
for the number of beads $N=64$, $128$ and the time step
$\Delta t=\frac 1 {50}$. We clearly see that the autocorrelation time
of \eqref{eq:pmmLang} is much smaller than that of \eqref{eq:pLang}.
Thus \eqref{eq:pmmLang} produces more effective independent samples
with the same amount of simulation time. This is consistent with the
presence of slow modes in \eqref{eq:pLang} discussed above; and
\eqref{eq:pmmLang} dynamics converges to equilibrium faster.  We also
show in Table~\ref{table:test2_2} the correspondingly empirical error,
95\% confidence interval and the mean squared error at simulation time
$T=10,000$, which further verify that asymptotic sample variance of
the \eqref{eq:pmmLang} dynamics is indeed smaller.

From all the tests above, we conclude that the preconditioning as in
\eqref{eq:pLang} and \eqref{eq:pmmLang} improves PIMD sampling, while
the \eqref{eq:pmmLang} sampling dynamics has superior performance as
it reduces further the asymptotic variance.

\begin{figure}[ht]
\begin{centering}
\includegraphics[scale=0.5]{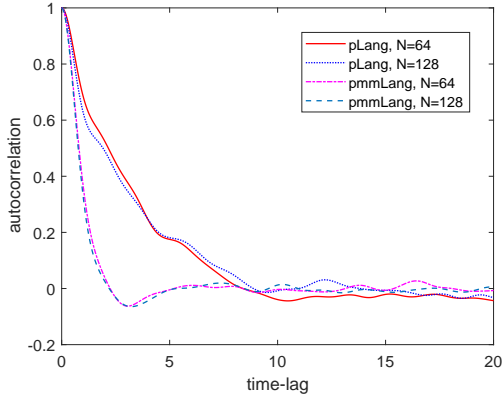} \\
\caption{1D Example. Autocorrelation of the position variables sampled from the \eqref{eq:pLang} system and the \eqref{eq:pmmLang} system.}
\label{fig:qauto}
\end{centering}
\end{figure}

\begin{table}[htpb]
  \centering
  \begin{tabular}{ c| c| c|c|c} \toprule
    & \multicolumn{2}{c|}{\eqref{eq:pLang} dynamics} & \multicolumn{2}{c}{\eqref{eq:pmmLang} dynamics} 
    \\    \hline
    & \,  N=64\, & \, N=128 \, & \, N=64 \, &  \, N=128\,  \\ 
    \hline
Error & 5.23e-3 & 9.78e-5 &  3.34e-3 & 4.02e-4 
    \\ \hline
   95\% C.I. & 2.91e-3 & 2.05e-3 & 1.44e-3 & 1.52e-3 
    \\ \hline
M.S.E. & 2.94e-5 & 1.10e-6 & 1.23e-5 & 7.70e-7   \\ 
    \bottomrule
\end{tabular}
\caption{1D Example. Errors in numerical empirical averages with $95$\% confidence intervals and mean squared errors. The reference value is {9.8734e-2.}}
  \label{table:test2_2}
\end{table}

\subsection{Comparison with the staging PIMD}

{To compare with the staging PIMD method
  (\cite{TBMK93,CM93,LiuLiLiu:16}), we repeat the 1D test problem with
  Hamiltonian in the staging coordinate \eqref{eq:Ham_staging}. Note
  that, to compare with the Langevin dynamics and the preconditioned
  versions proposed in Section~\ref{sec:underdamp}, we also applied
  the Langevin thermostat to \eqref{eq:Ham_staging} with inverse
  temperature $\beta_N$. We remark that, the staging PIMD that we test
  in the following is different from the one in \cite{TBMK93} in terms
  of thermostatting methods, but is similar to the Langevin dynamics
  in \cite{LiuLiLiu:16} although they chose the effective inverse
  temperature $\beta$ instead and they proposed the use of optimal
  friction coefficient.  In the first set of tests below, we stick to the choice the friction constant $\gamma=1$. 
  In fact, we have also repeated the tests in the second set, by the staging PIMD with parameters given in \cite{LiuLiLiu:16}.} 


{ We test the BAOAB method with $\Delta t=1$,
  $\frac 1 4$, $\frac 1 {16}$, $\frac 1 {64}$ and $\frac 1 {256}$. The
  errors in the empirical averages till simulation time $T=10,000$
  from each simulation are reported in Table~\ref{table:test1_5}. We
  observe that, with the staging coordinates, the numerical
  simulations are all stable, and the errors are similar to Langevin
  dynamics in Cartesian coordinates when the time steps are small
  enough. {Notice that, comparing with Table~\ref{table:test1_1} and Table~\ref{table:test1_3}, the numerical results seem to suggest that, when $\beta_N$ is small, the staging PIMD
  needs smaller time steps to show satisfactory accuracy.} 
 We also checked autocorrelation of
  the position variables, which are mapped from the staging variables,
  and the autocorrelation show smaller scaled oscillations as the
  number of beads increases, as shown in
  Figure~\ref{fig:qauto_s1}. This observation seems to agree with the
  multi-scale behaviors of the staging PIMD dynamics as
  $\beta_N \ll 1$. We admit the the staging PIMD algorithm we test
  above is not optimized due to the freedom of specifying
  parameters. }

{ Recently, the authors in \cite{LiuLiLiu:16} proposed
  an optimized version of the staging PIMD. We also tested the problem
  with the setup in \cite{LiuLiLiu:16} and we observe that the
  numerical performances (in Table~\ref{table:test1_6} and
  Figure~\ref{fig:qauto_s1}) are improved for some cases. Comparing
  Table~\ref{table:test1_5} and Table~\ref{table:test1_6}, when the
  beads number is large, the time steps in the latter case which are
  needed to obtain accurate approximations are less restricted.  To
  sum up, the numerical tests seem to suggest that the Langevin
  dynamics in staging coordinates exhibit improved stability
  conditions and show comparable accuracy with the Langevin dynamics
  in Catesian coordinate when time steps are sufficiently
  resolved. Among all the sampling dynamics that we have tested, the
  \eqref{eq:pLang} system and the \eqref{eq:pmmLang} system are still
  favored because they both give the most relaxed constraints for the
  time steps to produce accurate simulation results, and the
  \eqref{eq:pmmLang} system is slightly superior as shown in the tests
  above.  For example, when $\Delta t = \frac 1 4$, both the
  \eqref{eq:pLang} system and the \eqref{eq:pmmLang} system already
  give relative errors around or lower than $10\%$, but in this case,
  the relative errors by staging PIMD are still beyond $50\%$. On the
  other hand, when the $\Delta t$ is chosen small enough, the
  numerical performance of the staging PIMD and the two preconditioned
  versions proposed are rather similar.}


\begin{table}[pthb]
  \centering
    \begin{tabular}{ c| c| c|c|c|c} \toprule
   \# beads  & $\Delta t={1}$ & $\Delta t=\frac{1}{4}$ & $\Delta t=\frac{1}{16}$ & $\Delta t=\frac{1}{64}$ & $\Delta t=\frac{1}{256}$  \\ \hline
 32& 4.90e-2 & 5.22e-2 & 2.61e-2 & 1.12e-2 & 9.57e-3
    \\ \hline
 64 & 5.93e-2 & 5.99e-2 & 7.29e-2 & 4.45e-3 & 2.55e-3 
    \\ \hline
128 & 6.91e-2 & 6.92e-2 & 8.39e-2 & 6.26e-2 & 1.96e-4  \\ \bottomrule
\end{tabular}
\caption{1D Example. Numerical empirical averages computed with various time step sizes and various numbers of beads. The reference value is 9.8734e-2. Sampling is obtained by the Langevin dynamics with staging coordinates.}
  \label{table:test1_5}
\end{table}

\begin{table}[pthb]
  \centering
  \begin{tabular}{ c| c| c|c|c|c} \toprule
   \# beads  & $\Delta t={1}$ & $\Delta t=\frac{1}{4}$ & $\Delta t=\frac{1}{16}$ & $\Delta t=\frac{1}{64}$ & $\Delta t=\frac{1}{256}$  \\ \hline
 32& 4.85e-2 & 9.17e-2 & 9.88e-3 & 9.93e-3 & 9.10e-3
    \\ \hline
 64 & 9.57e-2 & 7.29e-2 & 4.02e-3 & 2.46e-3 & 1.51e-3 
    \\ \hline
128 & 9.48e-2 & 9.65e-2 & 1.23e-3 & 6.16e-4 & 5.84e-4  \\ \bottomrule
\end{tabular}
\caption{1D Example. Staging PIMD with the setup~\cite{LiuLiLiu:16}. Numerical empirical averages computed with various time step sizes and various numbers of beads. The reference value is 9.8734e-2. Sampling is obtained by the Langevin dynamics with staging coordinates.}
  \label{table:test1_6}
\end{table}

\begin{figure}[ht]
\begin{centering}
\includegraphics[scale=0.55]{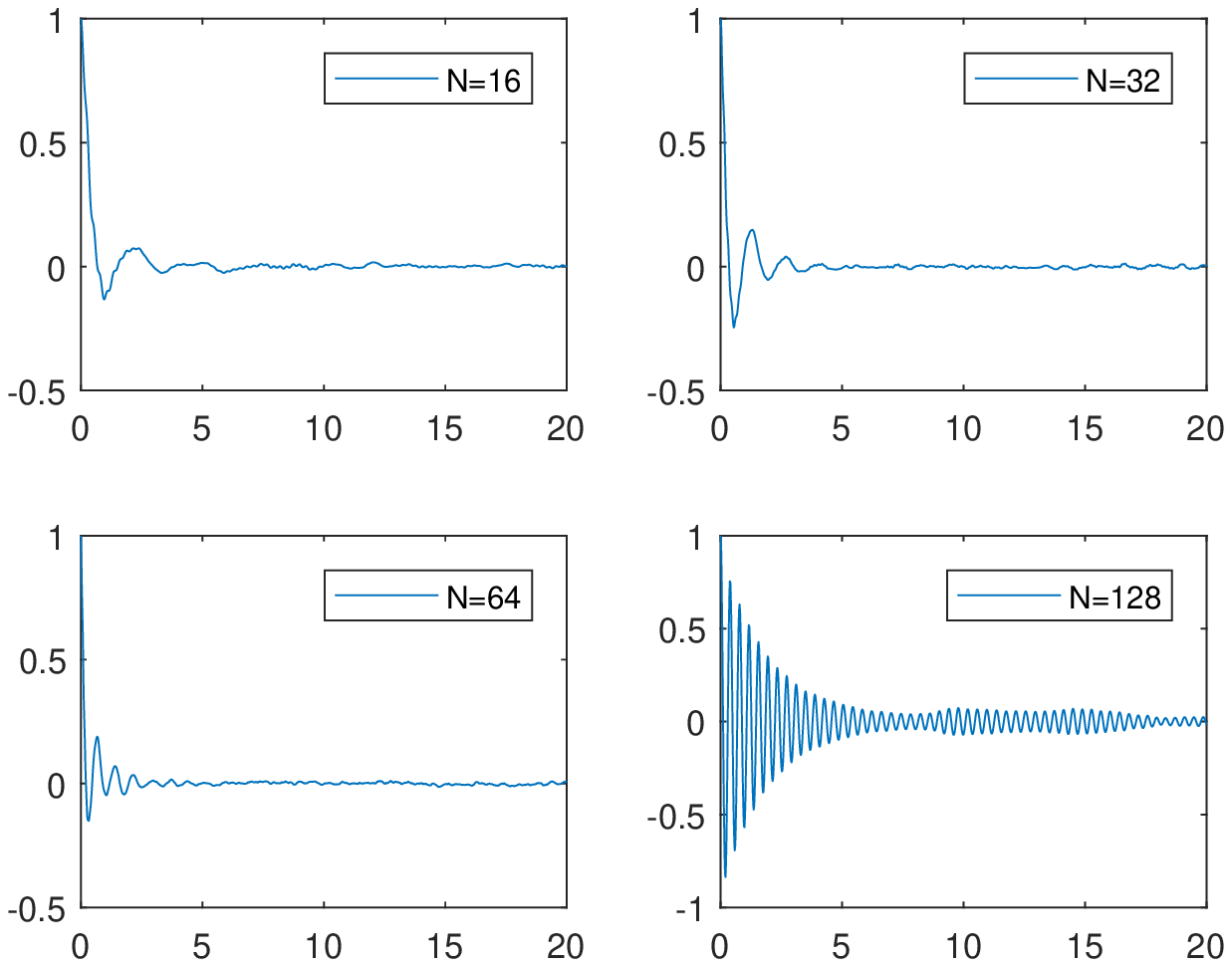}
\includegraphics[scale=0.55]{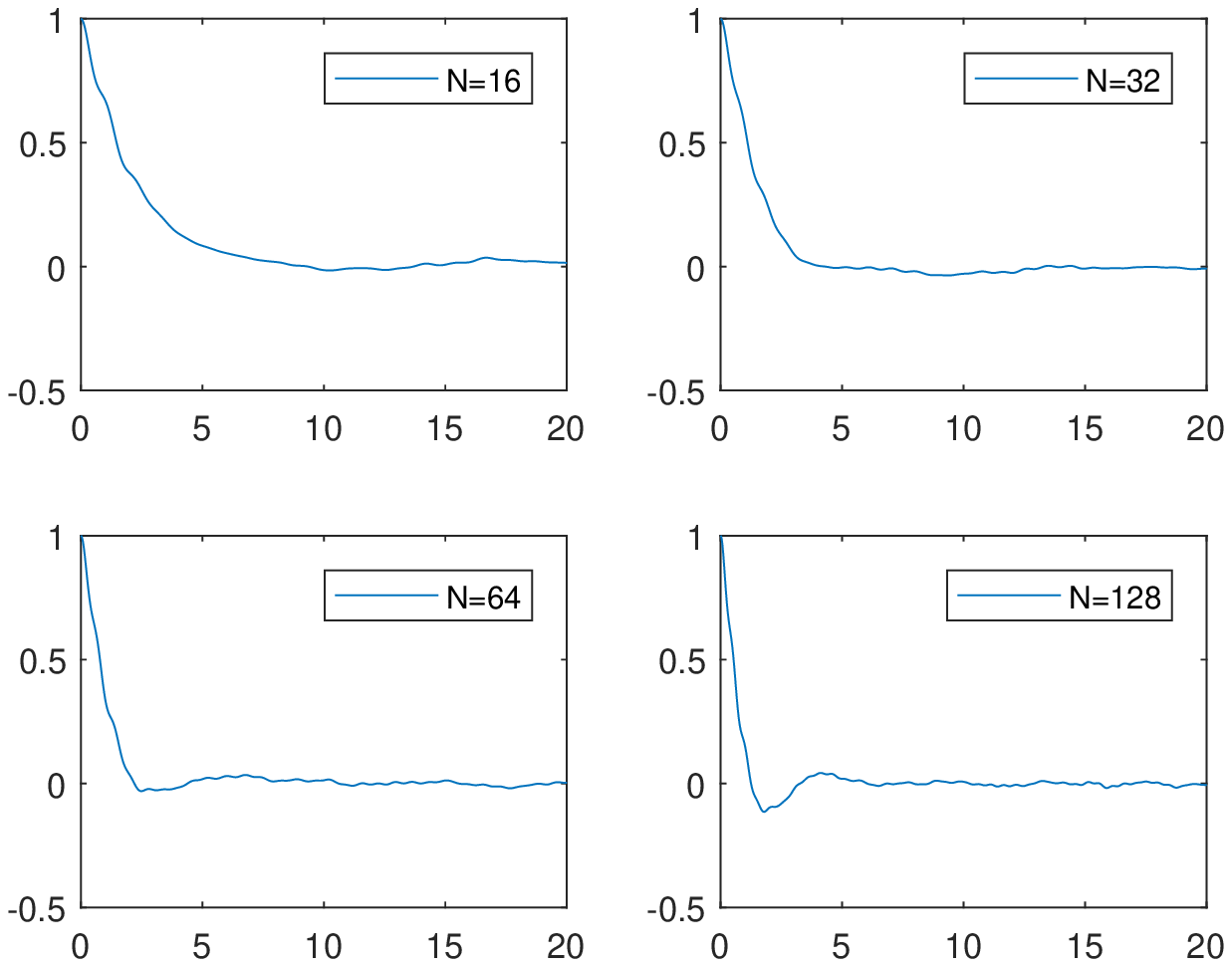}\\
\caption{1D Example.  Autocorrelation of the position variables sampled from the Langevin system in staging coordinates. Left: Staging PIMD with the friction constant $\gamma=1$. Right: Staging PIMD with the setup in \cite{LiuLiLiu:16}.}
\label{fig:qauto_s1}
\end{centering}
\end{figure}

\FloatBarrier

\subsection{Tests with the 2D example}

We now consider the 2D test example, with the potential function and
the observable given by \eqref{ex:pot2} and \eqref{ex:ob2},
respectively, and the inverse temperature $\beta=8$.

We only test the preconditioned dynamics \eqref{eq:pLang} and
\eqref{eq:pmmLang}, they have improved numerical stability. To study
the performance of the BAOAB method applied to the two preconditioned
systems, we take the number of beads $N=128$, which makes the
asymptotic error in the ring polymer approximation negligible. We test
the BAOAB scheme for the two dynamics with time step sizes
$\Delta t= \frac 1 2$, $\frac 1 4$, $\frac 1 8$ and $\frac 1 {16}$ and
simulation time $T=40,000$.  We plot the the mean squared errors at
different times in Figure \ref{fig:mse2d}. When
$\Delta t = \frac 1 2$, we observe for either system the sampling
error is saturated around $t=1,000$, and when $\Delta t = \frac 1 4$
or $\frac 1 8$, the sampling error is saturated around
$t=10,000$. This implies that when the simulation time is long enough,
the bias introduced by numerically integrating the sampling
trajectories with large time steps dominates the mean squared error.
Similar to the 1D tests, the numerical results suggest we can take
$o(1)$ time steps for accurate approximation of the ensemble average,
even for very large numbers of beads.

\begin{figure}[ht]
\begin{centering}
\includegraphics[scale=0.5]{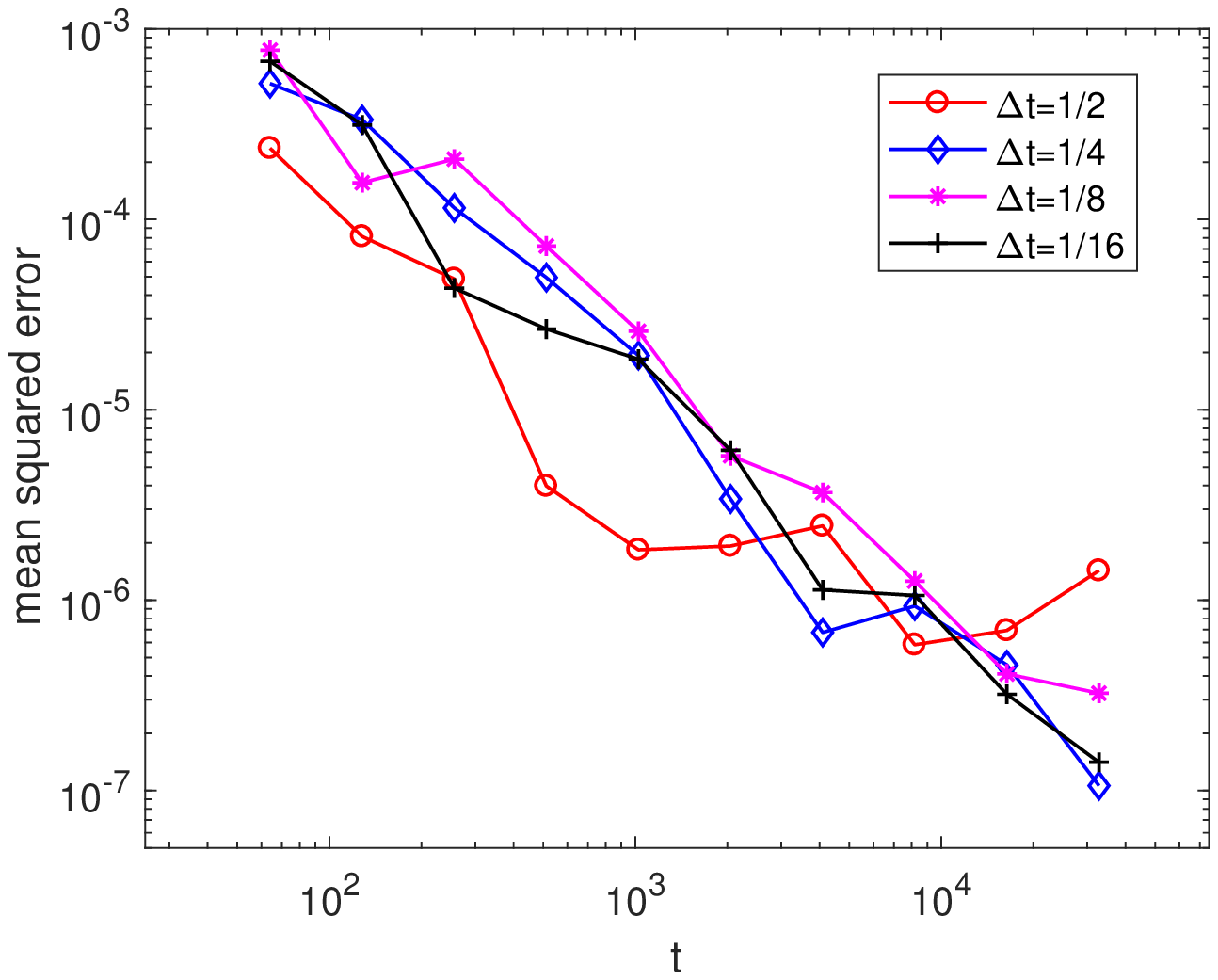} 
\includegraphics[scale=0.5]{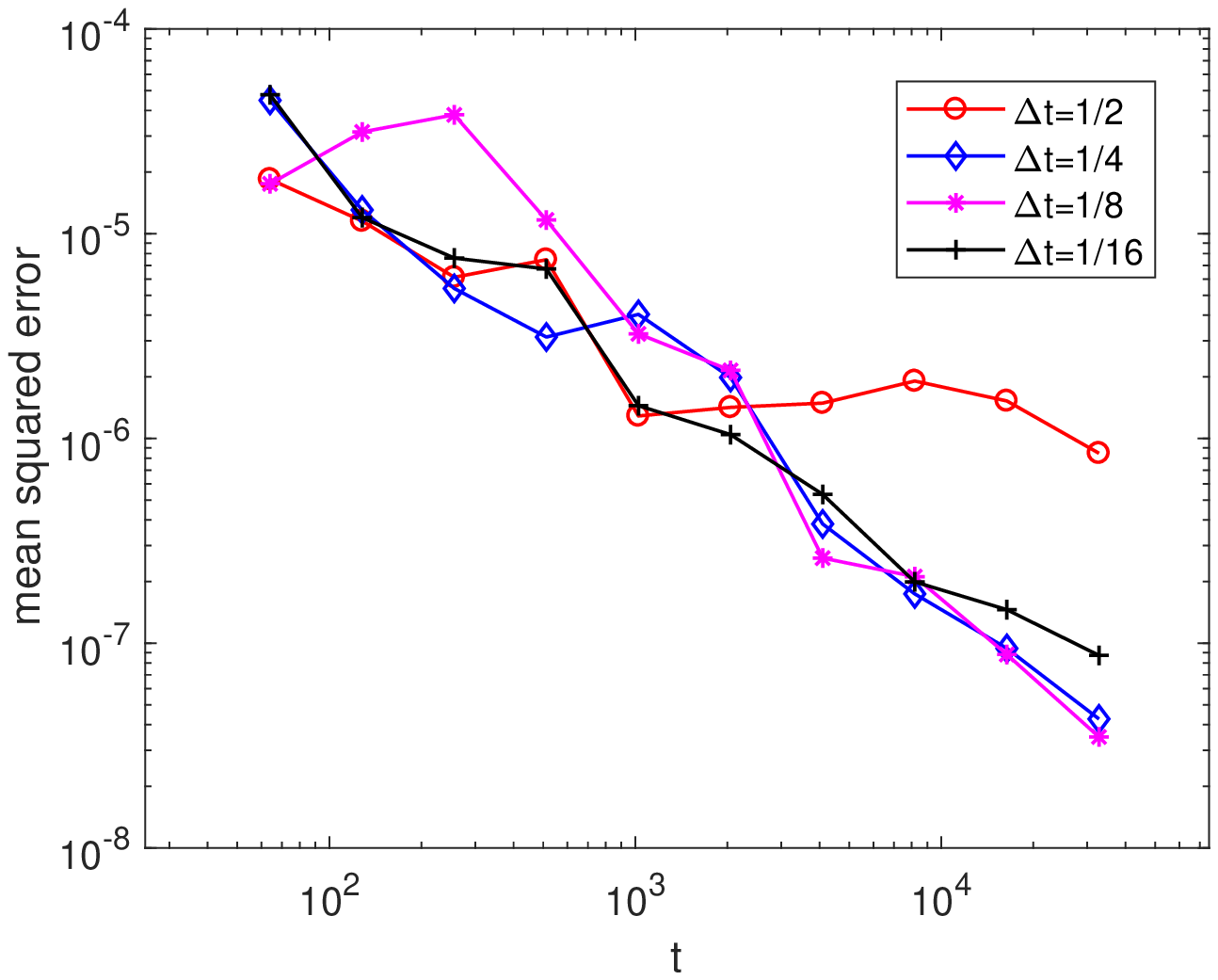} \\
\caption{2D Example. Mean squared errors at
  $t=2^6,\, 2^7, \cdots, 2^{15}$ by the BAOAB method with various time
  step sizes. Left: the \eqref{eq:pLang} dynamics. Right: the
  \eqref{eq:pmmLang} dynamics. 
}
\label{fig:mse2d}
\end{centering}
\end{figure}

Next, we compare the PIMD simulation for \eqref{eq:pLang} and the
\eqref{eq:pmmLang} for different numbers of beads.  We take
$\Delta t= \frac 1 {50}$ to make sure the bias due to numerical
integration of the sampling trajectory is negligible when the
simulation time $T=10,000$.  The mean squared errors for the two
sampling dynamics are plotted in Figure~\ref{fig:mse2d2}. We observe
in either test noticeable asymptotic error when the number of beads is
$32$ or $64$, while large number of beads reduces the error.

\begin{figure}[ht]
\begin{centering}
\includegraphics[scale=0.5]{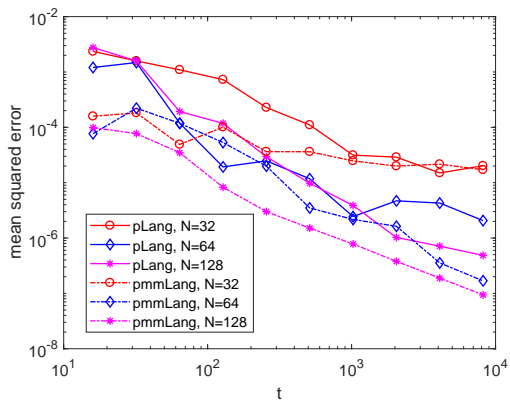} \\
\caption{2D Example. Mean squared errors at $t=2^4,\, 2^5, \cdots, 2^{13}$ by the BAOAB method with various number of beads.  Solid line: the \eqref{eq:pLang} system. Dash-dot line: the \eqref{eq:pmmLang} system.}
\label{fig:mse2d2}
\end{centering}
\end{figure}

Finally, we compare the accuracy of PIMD simulations using \eqref{eq:pLang}  and  \eqref{eq:pmmLang} dynamics. By comparing  Figure~\ref{fig:mse2d} and  Figure~\ref{fig:mse2d2}, we observe that when the numerical error is dominated by the sample variance (small $\Delta t$ and large number of beads), the \eqref{eq:pmmLang} dynamics gives better accuracy in terms of the mean squared error. The autocorrelation time for the $\bd q$ variable plotted in Figure~\ref{fig:qauto2} is similar to the 1D case, which indicates that \eqref{eq:pmmLang} has better sampling efficiency. This is also further confirmed by Table~\ref{table:test2d} which present empirical error, 95\% confidence interval and the mean squared error at simulation time $T=10,000$.

\begin{figure}[ht]
\begin{centering}
\includegraphics[scale=0.5]{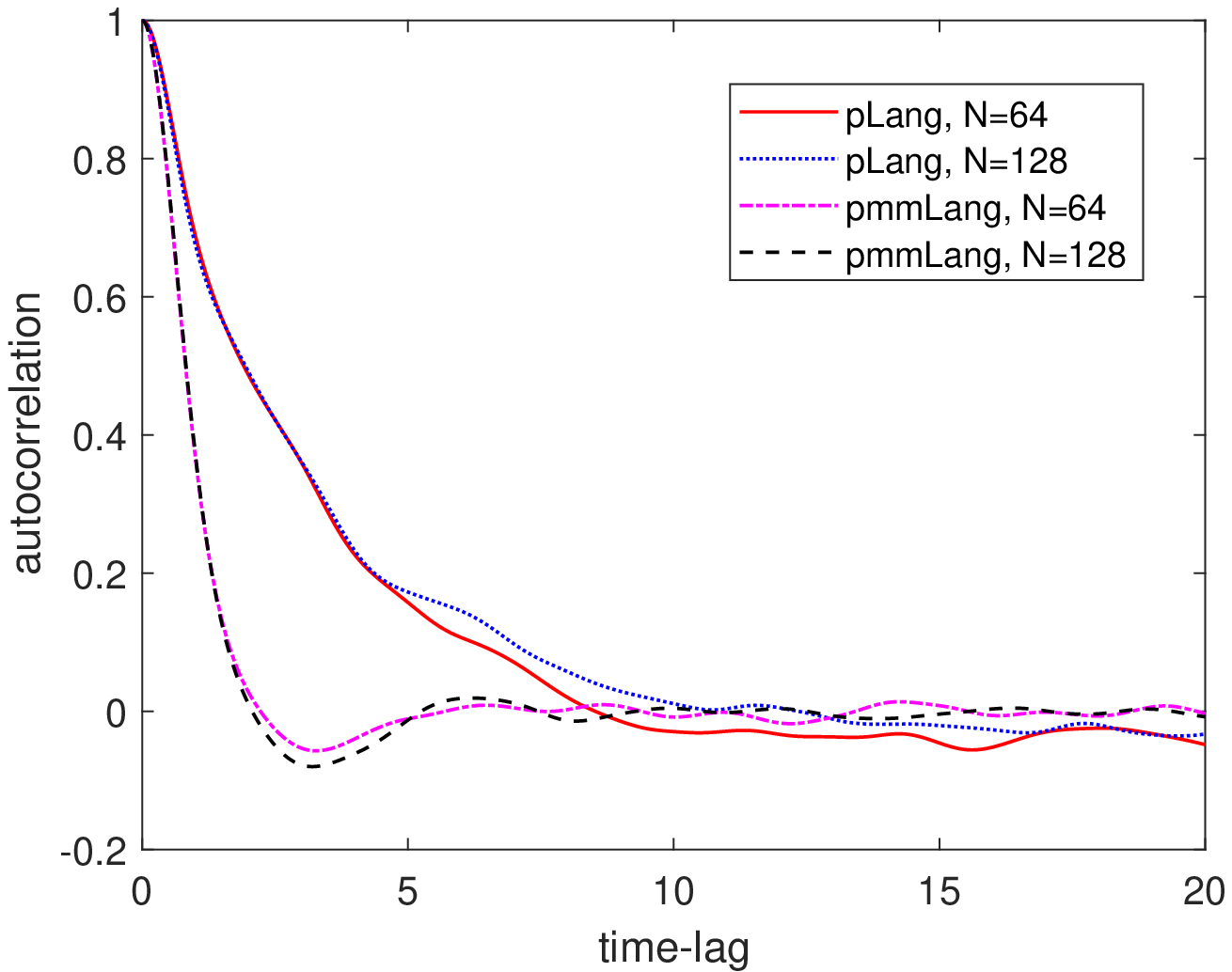} \\
\caption{2D Example. Autocorrelation of the position variables sampled from the \eqref{eq:pLang} system and the \eqref{eq:pmmLang} system.}
\label{fig:qauto2}
\end{centering}
\end{figure}

\begin{table}[htpb]
  \centering
  \begin{tabular}{ c| c| c|c|c} 
\toprule
    & \multicolumn{2}{c|}{\eqref{eq:pLang} dynamics} & \multicolumn{2}{c}{\eqref{eq:pmmLang} dynamics} 
    \\    \hline
    & \,  N=64\, & \, N=128 \, & \, N=64 \, &  \, N=128\,  \\ 
\hline
Error & 1.08e-3 & 5.64e-4 &  1.38e-3 & 7.70e-5 
    \\ \hline
   95\% C.I. & 9.71e-4 & 8.79e-4 & 6.78e-3 & 5.53e-4 
    \\ \hline
M.S.E. & 1.42e-6 & 5.19e-7 & 1.39e-7 & 9.28e-8   \\ \bottomrule
\end{tabular}
\caption{2D Example. Errors in numerical empirical averages with 95 \% confidence intervals and mean squared errors. The reference value is $-0.0888156$. }
  \label{table:test2d}
\end{table}

\subsection{Tests with the two-level system}


In this part, we implement the preconditioned sampling dynamics,  pmmIS, for the two-level quantum system \eqref{ex:pot1-2} with two sets of observables \eqref{eq:ob1} and \eqref{eq:ob2}. The numerical results are compared with those sampled by the direct simulation of the PIMS-SH method and its infinite swapping limit, which we abbreviate by ``DS'' and ``IS'', respectively. We choose the inverse temperature $\beta=4$.  We test the BAOAB method with $\Delta t=1$, $\frac 1 4$,
$\frac 1 {16}$, $\frac 1 {64}$ and $\frac 1 {256}$ for the three sampling systems. The errors in the
empirical averages till simulation time $T=10,000$ from each
simulation are reported in Table~\ref{table:test3_1}.

In \cite{LZPIMD1,LZPIMD2}, the authors have shown that the DS method give satisfactory performances for the diagonal observables, but requires smaller time steps in integrating the sampling trajectories for the off-diagonal observables, while numerical simulation based on its infinite swapping limit allows large time steps for all the observables with some reasonable increase in the computational cost. 

The numerical results based on the DS method in Table~\ref{table:test3_1} agree with the previous understanding of the method. Moreover, we observe that, similar to the single level cases, both the DS method and the IS method are unstable for big time steps when the number of beads is large. When the time steps are sufficiently small, the IS simulations show better accuracy for the off-diagonal observables than the DS simulations.

Simulations by pmmIS exhibit improved stability and satisfactory accuracy for the observables, and in particular, the  Table~\ref{table:test3_1} show that when $\Delta t= \frac 1 4$, pmmIS  already gives very good approximations of the thermal averages for both diagonal and off-diagonal observables.   The numerical results clearly verify that the pmmIS simulations  show great accuracy for fairly large time steps. 

\begin{table}[pthb]
  \centering
  \begin{tabular}{ c| c| c|c|c|c} \toprule
    \# beads  & $\Delta t={1}$ & $\Delta t=\frac{1}{4}$ & $\Delta t=\frac{1}{16}$ & $\Delta t=\frac{1}{64}$   &$\Delta t=\frac{1}{256}$  \\ \hline
 (DS) 32 & NaN & NaN & 8.51e-3 & 6.71e-3 & 4.55e-3
    \\ \hline
 (DS) 64 & NaN & NaN & 5.59e-1 & 6.28e-3 & 4.11e-3 
    \\ \hline
 (IS) 32 & NaN & NaN & 1.39e-3  & 1.18e-3 & 1.38e-3
    \\ \hline
 (IS) 64 & NaN & NaN &  8.39e-1 &   1.66e-3 & 1.67e-4
    \\ \hline
 (pmmIS) 32  & 2.38e-1  & 2.62e-3 & 3.38e-3 & 3.22e-3  & 1.03e-3
    \\ \hline
 (pmmIS) 64  & 2.33e-1 & 1.22e-3 & 1.80e-3 & 1.63e-3  & 2.20e-5
    \\ \bottomrule
\end{tabular}
\medskip

  \begin{tabular}{ c| c| c|c|c|c} \toprule
    \# beads  & $\Delta t={1}$ & $\Delta t=\frac{1}{4}$ & $\Delta t=\frac{1}{16}$ & $\Delta t=\frac{1}{64}$   &$\Delta t=\frac{1}{256}$  \\ \hline
 (DS) 32 & NaN  & NaN   & 2.93e-2 & 1.52e-2  & 1.17e-2
    \\ \hline
 (DS) 64 & NaN &  NaN & 6.78e-1 & 2.62e-2  &  3.29e-2
    \\ \hline
 (IS) 32 & NaN & NaN & 2.46e-3  & 1.62e-3 &  6.95e-4
    \\ \hline
 (IS) 64 & NaN & NaN & 8.16e-1  & 5.01e-4  & 1.09e-3
    \\ \hline
 (pmmIS) 32 & 3.07e-1  & 2.71e-3 &  2.51e-3 &  1.12e-3 &  2.02e-3
    \\ \hline
 (pmmIS) 64  & 3.10e-1 & 6.62e-3 & 4.12e-3  & 4.72e-4 & 9.90e-4
    \\ \bottomrule
\end{tabular}
\medskip

\caption{Two-level Example. Numerical empirical averages computed with various time step sizes and various numbers of beads.  Top: the diagonal observable. The reference value is 8.2234e-1. Bottom: the off-diagonal observable. The reference value is -8.1785e-1. ``NaN'' means the numerical integrator is unstable.}
  \label{table:test3_1}
\end{table}

\section{Conclusion}

We have introduced two preconditioned Langevin sampling dynamics for
path-integral molecular dynamics, which are in particular effective
when the number of beads is large. The forcing from the stiff spring
potential between beads is replaced by a linear damping term, which
allows large time steps for numerically integration. In terms of the
normal modes representation, the mapped modes in the preconditioned
Langevin approach has a uniform upper bound in frequency while the
mapped slow modes may take longer time for sampling. The mapped modes
in the preconditioned mass-modified Langevin sampling all have the
same frequency $1$, and the corresponding Langevin dynamics has a
natural connection to the continuum limit as the number of beads goes
to infinity. The numerical tests validate the improved stability and
better sampling accuracy for both preconditioned Langevin sampling
dynamics for thermal averages. 


\section*{Acknowledgments}
  The work of J.L.~is partially supported by the National Science
  Foundation under grant DMS-1454939. The work of Z.Z.~is partially
  supported by a start-up fund from Peking University and NSFC
  11801016. We thank Jian Liu for useful
  discussions.  

\appendix

\section{The covariance operator and its finite dimensional approximation}  \label{sec:cov}


The precondition schemes we proposed rely on the covariance operator
$\mathfrak C^\alpha =(\mathfrak L^\alpha)^{-1}$ and its finite
dimensional approximation $(L^\alpha)^{-1}$.  In this appendix,
we 
we derive the explicit expression for $\mathfrak C^\alpha$, and
discuss its discretizations.

To explicitly calculate the covariance operator $\mathfrak C^\alpha$, we solve for the covariance function $C^\alpha(\tau,\tau')$ such that the covariance operator is the integral operator with $C^\alpha(\tau,\tau')$ as its kernel, i.e.
\begin{equation}\label{eq:covf}
\mathfrak C^\alpha f (\tau) = \int_0^\beta C^\alpha (\tau,\tau') f(\tau') \ud \tau', \quad \tau \in [0, \beta].
\end{equation}
The covariance function satisfies the following boundary value problem, for $\tau,\tau'\in [0, \beta]$,
\begin{align} 
\mathfrak L^\alpha C^\alpha(\tau,\tau')&= \delta(\tau-\tau'); \label{eq:bvp1} \\
C^\alpha(0,\tau')&=C^\alpha(\beta,\tau'), \label{eq:bvp2} \\
C_{\tau}^\alpha(0,\tau')&=C_{\tau}^\alpha(\beta,\tau').  \label{eq:bvp3}
\end{align}

We show in the following the covariance operator as in \eqref{eq:covf} indeed gives the inverse of the $\mathfrak L^\alpha$ operator. For $f(\tau)$ and $g(\tau)$ satisfying the periodic boundary conditions, by Green's formula, we can easily show that
\[
\int_0^{\beta} f (\tau) \mathfrak L^\alpha g (\tau) - g(\tau) \mathfrak L^\alpha f (\tau) \ud \tau =0. 
\] 
Now we take $g(\tau)= C^\alpha(\tau,\tau')$, then $\mathfrak L^\alpha g (\tau)=  \delta (\tau-\tau')$, then we have
\[
f(\tau') - \int_0^{\beta} C^\alpha(\tau,\tau') \mathfrak L^\alpha f (\tau) \ud \tau =0,
\] 
Since it can be shown that $C^\alpha(\tau,\tau')$ is symmetric, we thus obtain
\[
f=\mathfrak C^\alpha \mathfrak L^\alpha f.
\]
This verifies $\mathfrak C^\alpha = (\mathfrak L^\alpha)^{-1}$.

Let us write down the explicit expression when $d = 1$ and
$\alpha = 1$. The extension to general cases is straightforward.  By
solving the boundary value problem \eqref{eq:bvp1}--\eqref{eq:bvp3},
we get
\begin{equation}
  C^1(\tau,\tau')=
  \begin{cases}
\frac{e^{\tau'-\tau}}{2(e^\beta-1)}+ \frac{e^{\tau-\tau'}}{2(1-e^{-\beta})}, & 0<\tau<\tau'<\beta, \\
\frac{e^{\tau'-\tau}}{2(1-e^{-\beta})}+\frac{e^{\tau-\tau'}}{2(e^\beta-1)}, & 0<\tau'<\tau<\beta.
\end{cases}
\end{equation}
In a more compact form, we can write
\[
C^1(\tau,\tau')=\frac{e^{|\tau-\tau'|}}{2(e^\beta-1)}+\frac{e^{-|\tau-\tau'|}}{2(1-e^{-\beta})}.
\]
From this, we see clearly, the covariance function is symmetric and is a function of $|\tau-\tau'|$.

Next, we discuss  two types of finite dimensional approximation  of the covariance operator, where the first one is based on the analytical expression of the covariance function and the second one is based on the inverse of the finite dimensional approximation of $\mathfrak L^\alpha$. 

If we denote the equidistant grid points in $\tau$ by $\tau=s_i$, $i=1,\cdots,N$, and evaluate the covariance function at those grid points, we obtain the numerical approximation of the covariance operator $\mathfrak C^1$, which is denoted by $C^1_N$. Namely, given a function $g: [0,\beta] \mapsto \R$ and we denote its confinement on the grids $\{ \tau_i \}$  by $\bd g$,  then we have
\[
C^1_N \, \bd g = \sum_{j=1}^N  C^1(\tau_i,\tau_j) g (\tau_j) \beta_N. 
\]
With a bit abuse of notations, we can also view $C^1_N $ as a matrix, such that 
\[
\left(C^1_N \right)_{ij}=  C^1(\tau_i,\tau_j) \beta_N. 
\]
However, this approximate covariance is not exactly the inverse of $L^1$ as in \eqref{eq:La} on the same grids, since $L^\alpha$ is only a finite difference approximation to the continuous counterpart as in \eqref{eq:cLa}. 

An alternative way is to directly take the inverse of the finite
dimensional approximation of $\mathfrak L^1$ on the grid
points. Consider the equidistant grid points $\{ \tau_i \}$, we
observe the finite difference approximation of the covariance operator
$\mathfrak L^1$ is exactly $L^1$ (viewed as a linear transform)
defined as in \eqref{eq:La}. Clearly, $L^1$ is strictly positive
definite, and is thus invertible. We will use this approach to
precondition finite dimensional systems. 


\section{Invariance measure in finite dimensional cases}  \label{app:inv}

In this section, we aim to verify that the finite dimensional Langevin dynamics, \eqref{eq:Lang}, \eqref{eq:pLang} and \eqref{eq:mmLang} all take $\pi_N(\bd q, \bd p)$ as their invariant measures (although the choices of the mass matrix are different), and \eqref{eq:pmmLang} takes the invariance measure $\widetilde \pi_N (\bd q, \bd v)$. 

Note that, albeit various choices of the mass matrices, in general  $\pi_N(\bd q, \bd p)$ is given by
\begin{equation} \label{eq:pi1}
\pi_N (\bd q, \bd p) \propto \exp \left(- \beta_N  \left( \frac 1 2 \bd q \cdot L^\alpha \bd q +  U_N^\alpha+ \frac{1}{2} \bd p \cdot M^{-1} \bd p   \right) \right).
\end{equation}

We consider the Langevin dynamics as defined in \eqref{eq:qagen} and \eqref{eq:pagen}, which covers the systems \eqref{eq:Lang}, \eqref{eq:pLang} and \eqref{eq:mmLang}.
%

The Fokker-Planck equation corresponding to \eqref{eq:qagen} and \eqref{eq:pagen} reads
\begin{equation} \label{eq:fk1pre}
\frac{\partial }{\partial t} f+  C_1  M^{-1} {\bd p} \cdot \nabla_{\bd q} f   + \left( (-C_1 L) \bd q- C_1 \nabla U_N \right)\cdot \nabla_{\bd p} f 
=   \gamma \nabla_{\bd p} \cdot \left( C_2 \bigl(\bd p f + \frac{M}{\beta_N} \nabla_{\bd p} f \bigr)  \right).
\end{equation}
Here, we have used the fact that
\begin{equation*}
\nabla_{\bd p}^2 : (C_2 M) = \sum_{ij} \partial_{p_i} \partial_{p_j} (C_2 M)_{ji} \\
 = \sum_{ij}\partial_{p_i} (C_2 M)_{ji}  \partial_{p_j} = \nabla_{\bd p} \cdot (C_2 M \nabla_{\bd p}).
\end{equation*}

Hence, we can easily see that,  \eqref{eq:pi1} is  an steady state to this Fokker-Planck equation. Therefore, \eqref{eq:Lang}, \eqref{eq:pLang} and \eqref{eq:mmLang} all have invariant measures as in \eqref{eq:pi1}. We also conclude that the preconditioning with $C_1$ and $C_2$ does not change the invariant measure. 

In particular, in \eqref{eq:pLang},  the Fokker-Planck equation takes the following form 
\begin{equation*}
\frac{\partial }{\partial t} f+  (L^\alpha)^{-1}  \frac 1 m {\bd p} \cdot \nabla_{\bd q} f   + \left( - \bd q- (L^\alpha)^{-1} \nabla U_N \right)\cdot \nabla_{\bd p} f  
=   \gamma \nabla_{\bd p} \cdot \left( (L^\alpha)^{-1} \bigl(\bd p f + \frac{m}{\beta_N} \nabla_{\bd p} f \bigr)  \right).
\end{equation*}
We observe that, the most stiff part of the original equation $-L^\alpha \bd q$ is replaced by $- \bd q$ due to the preconditioning. 

For the \eqref{eq:pmmLang} system, we have the phase space distribution in $(\bd q,\,\bd v)$ variables,
\begin{equation} \label{eq:pi3}
\widetilde \pi_N (\bd q, \bd v) \propto =\exp \left(-   \beta_N \left( \frac  1 2 \bd v \cdot L^\alpha \bd v + \frac  1 2 \bd q \cdot L \bd q +   U^\alpha_N  \right) \right).
\end{equation}

Clearly, the Fokker-Planck equation corresponding to \eqref{eq:pmmLang} as in \eqref{eq:pmmLang} is given by
\begin{equation}\label{eq:fk1M}
\frac{\partial }{\partial t} f+ {\bd v} \cdot \nabla_{\bd q} f   - \left(  \bd q+(L^\alpha)^{-1} \nabla_{\bd q} U_N \right)\cdot \nabla_{\bd v} f  
=   \gamma \nabla_{\bd p} \cdot \left( \bd p f +  \frac{(L^\alpha)^{-1}}{\beta_N} \nabla_{\bd p} f  \right).
\end{equation} 
We can rewrite this equation as
\begin{equation*}
\frac{\partial }{\partial t} f+ {(L^\alpha)^{-1} L^{\alpha}\bd v} \cdot \nabla_{\bd q} f   - (L^\alpha)^{-1} \left(  L^\alpha \bd q+ \nabla_{\bd q} U_N \right)\cdot \nabla_{\bd v} f \\
 =   \gamma \nabla_{\bd p} \cdot \left( \bd p f +  \frac{(L^\alpha)^{-1}}{\beta_N} \nabla_{\bd p} f  \right).
\end{equation*}
Therefore, we can verify that the \eqref{eq:pmmLang} system
\eqref{eq:pmmLang} takes \eqref{eq:pi3} as its invariant
measure. Finally, we observe that, similar to the Fokker-Planck
equation for \eqref{eq:pLang}, \eqref{eq:fk1M} also includes the term
$-\bd q$ in place of the stiff term $-L^\alpha \bd q$, but the rest of
the terms are different.

\section{Ring polymer representation for  two-level quantum systems} \label{app:two-level}

In this part, we present the details of the ring polymer representation of thermal averages as in \eqref{eq:aveA}  for two-level quantum systems, which have been rigorously derived in \cite{LZPIMD1}. With the diabatic basis, we approximate the partition
function by a ring polymer representation with $N$ beads
\[
\label{eq:reZ} \tr_{ne}[e^{-\beta \wh H}] \approx
\mc{Z}_N  := \frac{1}{(2\pi)^{dN}} \int_{\RR^{2dN}} \ud \bd q \ud \bd p 
\times \sum_{\bd{\ell} \in \{0, 1\}^N}  \exp(-\beta_N H_N(\bd{q},
\bd{p}, \bd{\ell})),
\] 
where $\beta_N=\beta/N$. The ring polymer that consists
of $N$ beads is prescribed by the configuration
$(\bd{q}, \bd{p}, \bd{\ell}) \in \RR^{dN} \times \RR^{dN} \times \{0,
1\}^N$.

For a given ring
polymer with configuration $(\bd{q}, \bd{p}, \bd{\ell})$, the
effective Hamiltonian $H_N(\bd{q}, \bd{p}, \bd{\ell})$ is given by
\begin{equation}\label{eq:Ham} H_N(\bd q, \bd p, \bd\ell) =  \frac 1 2 \bd p \cdot  M^{-1} \bd p +
\sum_{k=1}^N \bra{\ell_k} S_k \ket{\ell_{k+1}},
\end{equation} 
where we take the convention that $\ell_{N+1} = \ell_1$
and matrix elements of $S_k$, $k = 1, \ldots, N$, are given by
\begin{subequations} \label{eq:Gele}
\[
  \bra{\ell} S_{k} \ket{\ell'} =  \frac{ M\left(q_k-q_{k+1}
\right)^2 }{2(\beta_N)^2}  +\frac{V_{00}(q_k)+V_{11}(q_k)}{2} 
-\frac{1}{\beta_N}\ln\Bigl( \sinh \bigl(\beta_N \Abs{V_{01}(q_k)}
\bigr) \Bigr), 
\]
for $\ell \ne \ell'$, and the diagonal terms are given as
\[ \bra{\ell} S_{k} \ket{\ell} = 
  \frac{M \left(q_k-q_{k+1} \right)^2}{2(\beta_N)^2}  +V_{\ell\ell}(q_k)
  - \frac{1}{\beta_N} \ln \Bigl( \cosh \bigl(
  \beta_N \Abs{ V_{01} (q_k) }\bigr) \Bigr),
\]
\end{subequations}
where we have suppressed the $\bd{q}$ and $\bd{p}$ dependence in the
notation of $S_k$.  Here $S_k$ can be understood as the the contribution of $\bra{q_k} e^{-\beta_N \wh H} \ket{q_{k+1}}$
to the effective Hamiltonian $H_N$ in the ring polymer representation. The readers may refer to \cite{LZPIMD1} for the derivations.

For an observable $\wh{A}$, under the ring polymer representation, we
have
\begin{equation} \label{eq:reAo} \tr_{ne}[e^{-\beta \wh H}\wh A ]
\approx \frac{1}{(2\pi)^{dN}} \int_{\RR^{2dN}} \ud \bd q \ud \bd p
\sum_{\bd{l} \in \{0, 1\}^N} \\ \times \exp(-\beta_N H_N) W_N[A],
\end{equation} where the weight function associated to the observable
is given by (recall that $\wh{A}$ only depends on position by our
assumption)
\begin{equation}\label{eq:WNA} W_N[A] (\bd q, \bd p, \bd\ell) =
  \frac{1}{N} \sum_{k=1}^N \biggl( \bra{\ell_k} A(q_k) \ket{\ell_k} \\
  - e^{\beta_N \langle \ell_k| S_k| \ell_{k+1} \rangle - \beta_N
    \bra{\bar{\ell}_k} S_{k} \ket{ \ell_{k+1}}} \bra{\ell_{k}}
  A(q_{k}) \ket{\bar{\ell}_k}
  \frac{V_{\ell_k\bar{\ell}_k}}{\abs{V_{\ell_k\bar{\ell}_k}}} \biggr),
\end{equation}
where we have introduced the short hand notation
$\bar{\ell}_k = 1 - \ell_k$, \textit{i.e.}, $\bar{\ell}_k$ is the
level index of the other potential energy surface than the one
corresponds to $\ell_k$ in our two-level case.  Similar as for the
partition function, the ring polymer representation \eqref{eq:reAo}
replaces the quantum thermal average by an average over ring polymer
configurations on the extended phase space
$\R^{dN} \times \R^{dN} \times \{ 0, 1 \}^N$.

The ring polymer representation for a multi-level quantum system can
be also constructed using the adiabatic basis
\cite{SchmidtTully2007,LZPIMD1}, and much of the current work
also extends to the ring polymer with the adiabatic basis. We will skip
the details and leave to interested readers.

\bibliographystyle{plain}

\bibliography{surfacehoppingPIMDcontconv} 

\begin{thebibliography}{10}

\bibitem{BerneThirumalai:86}
B.J. Berne and D.~Thirumalai.
\newblock On the simulation of quantum systems: path integral methods.
\newblock {\em Ann. Rev. Phys. Chem.}, 37:401--424, 1986.

\bibitem{BPSS11}
A.~Beskos, F.J. Pinski, J.M. Sanz-Serna, and A.M. Stuart.
\newblock Hybrid {M}onte {C}arlo on {H}ilbert spaces.
\newblock {\em Stoch. Process. Their Appl.}, 121(10):2201--2230, 2011.

\bibitem{BRSV08}
A.~Beskos, G.~Roberts, A.~Stuart, and J.~Voss.
\newblock Mcmc methods for diffusion bridges.
\newblock {\em Stoch. Dyn.}, 8(3):319--350, 2008.

\bibitem{beskos2017geometric}
Alexandros Beskos, Mark Girolami, Shiwei Lan, Patrick~E Farrell, and Andrew~M
  Stuart.
\newblock Geometric mcmc for infinite-dimensional inverse problems.
\newblock {\em Journal of Computational Physics}, 335:327--351, 2017.

\bibitem{CM93}
J.~Cao and G.J. Martyna.
\newblock Adiabatic path integral molecular dynamics methods. ii. algorithms.
\newblock {\em J. Chem. Phys.}, 104:2028--2035, 1996.

\bibitem{CaoVoth:94b}
J.~Cao and G.A. Voth.
\newblock The formulation of quantum statistical mechanics based on the
  {F}eynman path centroid density. {I}. {D}ynamical properties.
\newblock {\em J. Chem. Phys.}, 100:5106--5117, 1994.

\bibitem{CaoVoth:94a}
J.~Cao and G.A. Voth.
\newblock The formulation of quantum statistical mechanics based on the
  {F}eynman path centroid density. {I}. {E}quilibrium properties.
\newblock {\em J. Chem. Phys.}, 100:5093--5105, 1994.

\bibitem{CBP09}
M.~Ceriotti, G.~Bussi, and M.~Parrinello.
\newblock Nuclear quantum effects in solids using a colored-noise thermostat.
\newblock {\em Phys. Rev. Lett.}, 103(3):030603, 2009.

\bibitem{CMP11}
M.~Ceriotti, D.~E. Manolopoulos, and M.~Parrinello.
\newblock Accelerating the convergence of path integral dynamics with a
  generalized langevin equation.
\newblock {\em J. Chem. Phys.}, 134:084104, 2011.

\bibitem{CeriottiParrinelloMarklandManolopoulos:10}
M.~Ceriotti, M.~Parrinello, T.~E. Markland, and D.~E. Manolopoulos.
\newblock Efficient stochastic thermostatting of path integral molecular
  dynamics.
\newblock {\em J. Chem. Phys.}, 133(12):124104, 2010.

\bibitem{CPMM10}
M.~Ceriotti, M.~Parrinello, T.E. Markland, and D.E. Manolopoulos.
\newblock Accelerating the convergence of path integral dynamics with a
  generalized langevin equation.
\newblock {\em J. Chem. Phys.}, 133:124104, 2010.

\bibitem{ChandlerWolynes:81}
D.~Chandler and P.G. Wolynes.
\newblock Exploiting the isomorphism between quantum theory and classical
  statistical mechancis of polyatomic fluids.
\newblock {\em J. Chem. Phys.}, 74:4078--4095, 1981.

\bibitem{cotter2013mcmc}
Simon~L Cotter, Gareth~O Roberts, Andrew~M Stuart, and David White.
\newblock Mcmc methods for functions: modifying old algorithms to make them
  faster.
\newblock {\em Statistical Science}, pages 424--446, 2013.

\bibitem{CraigManolopoulos:04}
I.~R. Craig and D.~E. Manolopoulos.
\newblock Quantum statistics and classical mechanics: {R}eal time corrleation
  functions from ring polymer molecular dynamics.
\newblock {\em J. Chem. Phys.}, 121:3368, 2004.

\bibitem{CM04}
I.R. Craig and D.E. Manolopoulos.
\newblock Quantum statistics and classical mechanics: Real time correlation
  functions from ring polymer molecular dynamics.
\newblock {\em J. Chem. Phys.}, 121:3368--3373, 2004.

\bibitem{DZZ96}
G.~Da~Prato and J.~Zabczyk.
\newblock {\em Ergodicity for infinite dimensional systems (Vol. 229)}.
\newblock Cambridge University Press, 1996.

\bibitem{Eberle14}
A.~Eberle.
\newblock Error bounds for metropolis-hastings algorithms applied to
  perturbations of gaussian measures in high dimensions.
\newblock {\em Ann. Appl. Probab.}, 24(1):337--377, 2014.

\bibitem{EGZ19}
A.~Eberle, A.~Guillin, and R.~Zimmer.
\newblock Couplings and quantitative contraction rates for langevin dynamics.
\newblock {\em Ann. Probab.}, 47(4):1982--2010, 2019.

\bibitem{Feynman:72}
R.P. Feynman.
\newblock {\em Statistical Mechanics}.
\newblock Addison-Wesley, Reading, MA, 1972.

\bibitem{FD84}
D.L. Freeman and J.D. Doll.
\newblock A monte carlo method for quantum boltzmann statistical mechanics
  using fourier representations of path integrals.
\newblock {\em J. Chem. Phys.}, 80:5709--5718, 1984.

\bibitem{HabershonManolopoulosMarklandMiller:13}
S.~Habershon, D.~E. Manolopoulos, T.E. Markland, and T.F. Miller~III.
\newblock Ring-polymer molecular dynamics: quantum effects in chemical dynamics
  from classical trajectories in an extended phase space.
\newblock {\em Annu. Rev. Phys. Chem.}, 64, 2013.

\bibitem{Hairer16}
M.~Hairer.
\newblock The motion of a random string, 2016.
\newblock preprint, arXiv:1605.02192.

\bibitem{HSV14}
M.~Hairer, A.M. Stuart, and S.J. Vollmer.
\newblock Spectral gaps for a metropolis-hastings algorithm in infinite
  dimensions.
\newblock {\em Ann. Appl. Probab.}, 24(6):2455--2490, 2014.

\bibitem{HSV07}
M.~Hairer, A.M. Stuart, and J.~Voss.
\newblock Analysis of spdes arising in path sampling part ii: The nonlinear
  case.
\newblock {\em Ann. Appl. Probab.}, 17(5/6):1657--1706, 2007.

\bibitem{HSVW05}
M.~Hairer, A.M. Stuart, J.~Voss, and P.~Wiberg.
\newblock Analysis of spdes arising in path sampling. part i: The gaussian
  case.
\newblock {\em Commun. Math. Sci.}, 3(4):587--603, 2005.

\bibitem{HeleWillattMuoloAlthorpe:15a}
T.J.H. Hele, M.J. Willatt, A.~Muolo, and S.C. Althorpe.
\newblock {B}oltzmann-conserving classical dynamics in quantum time-correlation
  functions: ``{M}atsubara dynamics''.
\newblock {\em J. Chem. Phys.}, 142:134103, 2015.

\bibitem{HeleWillattMuoloAlthorpe:15b}
T.J.H. Hele, M.J. Willatt, A.~Muolo, and S.C. Althorpe.
\newblock Communication: {R}elation of centroid molecular dynamics and
  ring-polymer molecular dynamics to exact quantum dynamics.
\newblock {\em J. Chem. Phys.}, 142:191101, 2015.

\bibitem{JangVoth:99}
S.~Jang and G.~A. Voth.
\newblock A derivation of centroid molecular dynamics and other approximate
  time evolution methods for path integral centroid variables.
\newblock {\em J. Chem. Phys.}, 111:2371, 1999.

\bibitem{Kapral:06}
R.~Kapral.
\newblock Progress in the theory of mixed quantum-classical dynamics.
\newblock {\em Annu. Rev. Phys. Chem.}, 57:129--157, 2006.

\bibitem{KBM19}
R.~Korol, N.~Bou-Rabee, and T.F. Miller~III.
\newblock Dimension-free path-integral molecular dynamics without
  preconditioning.
\newblock 2019.
\newblock preprint, arXiv:1911.00931.

\bibitem{lan2019adaptive}
Shiwei Lan.
\newblock Adaptive dimension reduction to accelerate infinite-dimensional
  geometric markov chain monte carlo.
\newblock {\em Journal of Computational Physics}, 392:71--95, 2019.

\bibitem{LeimkuhlerMatthews}
B.~Leimkuhler and C.~Matthews.
\newblock Robust and efficient configurational molecular sampling via
  {L}angevin dynamics.
\newblock {\em J. Chem. Phys.}, 138(17):174102, 2013.

\bibitem{Liu2014}
J.~Liu.
\newblock Path integral {L}iouville dynamics for thermal equilibrium systems.
\newblock {\em J. Chem. Phys.}, 140(22):224107, 2014.

\bibitem{LiuLiLiu:16}
J.~Liu, D.~Li, and X.~Liu.
\newblock A simple and accurate algorithm for path integral molecular dynamics
  with the {L}angevin thermostat.
\newblock {\em J. Chem. Phys.}, 145:024103, 2016.

\bibitem{LiuZhang2016}
J.~Liu and Z.~Zhang.
\newblock Path integral {L}iouville dynamics: {A}pplications to infrared
  spectra of {O}{H}, water, ammonia, and methane.
\newblock {\em J. Chem. Phys.}, 144(3):034307, 2016.

\bibitem{LZPIMD2}
J.~Lu and Z.~Zhou.
\newblock Accelerated sampling by infinite swapping of path integral molecular
  dynamics with surface hopping.
\newblock {\em J. Chem. Phys.}, 148(6):4110, 2017.

\bibitem{LZPIMD1}
J.~Lu and Z.~Zhou.
\newblock Path integral molecular dynamics with surface hopping for thermal
  equilibrium sampling of nonadiabatic systems.
\newblock {\em J. Chem. Phys.}, 146(15):4110, 2017.

\bibitem{LL18}
J.~Lu and Z.~Zhou.
\newblock Path integral molecular dynamics for exact quantum statistics of
  multi-electronic-state systems.
\newblock {\em J. Chem. Phys.}, 148(10):2319, 2018.

\bibitem{Makri:99}
N.~Makri.
\newblock Time-dependent quantum methods for large systems.
\newblock {\em Annu. Rev. Phys. Chem.}, 50:167--191, 1999.

\bibitem{MarklandManolopoulos:08}
T.~E. Markland and D.~E. Manolopoulos.
\newblock An efficient ring polymer contraction scheme for imaginary time path
  integral simulations.
\newblock {\em J. Chem. Phys.}, 129(2):024105, 2008.

\bibitem{Matsubara}
T.~Matsubara.
\newblock A new approach to quantum-statistical mechanics.
\newblock {\em Prog. Theor. Phys.}, 14(4):351--378, 1955.

\bibitem{MeyerMiller:79}
H.-D. Meyer and W.H. Miller.
\newblock A classical analog for electronic degrees of freedom in nonadiabatic
  collision processes.
\newblock {\em J. Chem. Phys.}, 70:3214--3223, 1979.

\bibitem{ottobre2016function}
Michela Ottobre, Natesh~S Pillai, Frank~J Pinski, Andrew~M Stuart, et~al.
\newblock A function space hmc algorithm with second order langevin diffusion
  limit.
\newblock {\em Bernoulli}, 22(1):60--106, 2016.

\bibitem{SchmidtTully2007}
J.~R. Schmidt and J.~C. Tully.
\newblock Path-integral simulations beyond the adiabatic approximation.
\newblock {\em J. Chem. Phys.}, 127(9):094103, 2007.

\bibitem{StockThoss:97}
G.~Stock and M.~Thoss.
\newblock Semiclassical description of nonadiabatic quantum dynamics.
\newblock {\em Phys. Rev. Lett.}, 78(4):578, 1997.

\bibitem{StockThoss:05}
G.~Stock and M.~Thoss.
\newblock Classical description of nonadiabatic quantum dynamics.
\newblock {\em Adv. Chem. Phys.}, 131:243--376, 2005.

\bibitem{TSM18}
X.~Tao, P.~Shushkov, and T.F. Miller~III.
\newblock Path-integral isomorphic hamiltonian for including nuclear quantum
  effects in non-adiabatic dynamics.
\newblock {\em J. Chem. Phys.}, 148(10):102327, 2018.

\bibitem{TBMK93}
M.E. Tuckerman, B.J. Berne, G.J. Martyna, and M.L. Klein.
\newblock Efficient molecular dynamics and hybrid monte carlo algorithms for
  path integrals.
\newblock {\em J. Chem. Phys.}, 99:2796--2808, 1993.

\bibitem{Z17}
R.~Zimmer.
\newblock Explicit contraction rates for a class of degenerate and
  infinite-dimensional diffusions.
\newblock {\em Stoch. Partial Differ. Equ. Anal. Comput.}, 5(3):368--399, 2017.

\end{thebibliography}

\end{document}